\begin{document}
\newcommand{\hide}[1]{}

\newcommand{\largespace}{\hspace*{1.0in}}

\newcommand{\chs}[2]{\left(\!
    \begin{array}{c}
      #1 \\
      #2
    \end{array}
  \!\right)}

\newcommand{\bba}{\mathbf{a}}
\newcommand{\bbb}{\mathbf{b}}
\newcommand{\bbc}{\mathbf{c}}
\newcommand{\bbd}{\mathbf{d}}
\newcommand{\bbe}{\mathbf{e}}
\newcommand{\bbf}{\mathbf{f}}
\newcommand{\bbg}{\mathbf{g}}
\newcommand{\bbh}{\mathbf{h}}
\newcommand{\bbi}{\mathbf{i}}
\newcommand{\bbj}{\mathbf{j}}
\newcommand{\bbk}{\mathbf{k}}
\newcommand{\bbl}{\mathbf{l}}
\newcommand{\bbm}{\mathbf{m}}
\newcommand{\bbn}{\mathbf{n}}
\newcommand{\bbo}{\mathbf{o}}
\newcommand{\bbp}{\mathbf{p}}
\newcommand{\bbq}{\mathbf{q}}
\newcommand{\bbr}{\mathbf{r}}
\newcommand{\bbs}{\mathbf{s}}
\newcommand{\bbt}{\mathbf{t}}
\newcommand{\bbu}{\mathbf{u}}
\newcommand{\bbv}{\mathbf{v}}
\newcommand{\bbw}{\mathbf{w}}
\newcommand{\bbx}{\mathbf{x}}
\newcommand{\bby}{\mathbf{y}}
\newcommand{\bbz}{\mathbf{z}}

\newcommand{\bbA}{\mathbf{A}}
\newcommand{\bbB}{\mathbf{B}}
\newcommand{\bbC}{\mathbf{C}}
\newcommand{\bbD}{\mathbf{D}}
\newcommand{\bbE}{\mathbf{E}}
\newcommand{\bbF}{\mathbf{F}}
\newcommand{\bbG}{\mathbf{G}}
\newcommand{\bbH}{\mathbf{H}}
\newcommand{\bbI}{\mathbf{I}}
\newcommand{\bbJ}{\mathbf{J}}
\newcommand{\bbK}{\mathbf{K}}
\newcommand{\bbL}{\mathbf{L}}
\newcommand{\bbM}{\mathbf{M}}
\newcommand{\bbN}{\mathbf{N}}
\newcommand{\bbO}{\mathbf{O}}
\newcommand{\bbP}{\mathbf{P}}
\newcommand{\bbQ}{\mathbf{Q}}
\newcommand{\bbR}{\mathbf{R}}
\newcommand{\bbS}{\mathbf{S}}
\newcommand{\bbT}{\mathbf{T}}
\newcommand{\bbU}{\mathbf{U}}
\newcommand{\bbV}{\mathbf{V}}
\newcommand{\bbW}{\mathbf{W}}
\newcommand{\bbX}{\mathbf{X}}
\newcommand{\bbY}{\mathbf{Y}}
\newcommand{\bbZ}{\mathbf{Z}}

\newtheorem{theorem}{Theorem}[section]
\newtheorem{corollary}{Corollary}[theorem]
\newtheorem{lemma}[theorem]{Lemma}
\newtheorem{definition}[theorem]{Definition}
\newtheorem{claim}[theorem]{Claim}

\newcommand\numberthis{\addtocounter{equation}{1}\tag{\theequation}}

\newcommand{\rjc}[1]{{\color{blue}{#1}}}
\newcommand{\RJC}[1]{\textbf{\color{Orange}RJC: #1}}
\newcommand{\yxt}[1]{{\color{red}{#1}}}
\newcommand{\YXT}[1]{\textbf{\color{green}YXT: #1}}


\title{On the Existence of Pareto Efficient and Envy-Free Allocations\footnote{This work was supported in part by NSF grants CCF-1527568 and CCF-1909538.}}


\author{Richard Cole and Yixin Tao}
\maketitle

\begin{abstract}
Envy-freeness and Pareto Efficiency are two major goals in welfare economics. The existence of an allocation that satisfies both conditions has been studied for a long time. Whether items are indivisible or divisible, it is impossible to achieve envy-freeness and Pareto Efficiency ex post even in the case of two people and two items. In contrast, in this work, we prove that, for any cardinal utility functions (including complementary utilities for example) and for any number of items and players, there always exists an ex ante mixed allocation which is envy-free and Pareto Efficient, assuming the allowable assignments are closed under swaps, i.e. if given a legal assignment, swapping any two players’ allocations produces another legal assignment. The problem remains open in the divisible case.

We also investigate the communication complexity for finding a Pareto Efficient and envy-free allocation.

\end{abstract}
\section{Introduction}
Efficiency and fairness are two important goals in welfare economics.
Pareto Efficiency and envy-freeness are the foremost notions of, respectively, efficiency and fairness for the allocation problem.
A given allocation is Pareto Efficient if there is no other allocation
in which no one loses and at least one person gains,
and it is envy-free if no person can gain by exchanging her
allocation with someone else's.

The question of whether there exists an allocation that is both Pareto Efficient and envy-free has been studied for a long time.
Unfortunately, for general utility functions, in both the divisible and indivisible cases, solutions that are simultaneously Pareto Efficient and envy-free cannot be ensured. In the indivisible case, in which items cannot be split, allocating one item among two people who both value the item will never be envy-free and, in the divisible case, there is a well-known example which comprises two items and two players such that there is no simultaneously Pareto Efficient and envy-free allocation.

However, in these counter-examples, the allocations are deterministic. In other words, there is no randomness. Randomness is often the enabler for existence; for instance, the existence of a Nash equilibrium. So, for the allocation problem, what happens if we consider  a mixed allocation instead of a pure allocation? Does there exist a mixed Pareto Efficient and envy-free allocation?

In this paper, we focus on the indivisible case, and our answer is YES. We prove that, for any cardinal utility functions and for any number of items and players, there always exists an ex ante mixed allocation which is envy-free and Pareto Efficient, assuming the allowable assignments are \emph{swappable}.
An allocation set is swappable if the allocation that results from any single pair
of players exchanging their allocated bundles is also allowable.
Clearly, the allocation set that can allocate any subset of items to any player is \emph{swappable}.

Our approach is to construct a mapping from the space of mixed allocations and weight vectors to itself.  We then apply the Kakutani fixed-point theorem \cite{kakutani1941generalization} to obtain a fixed point. Finally, we prove that the fixed point corresponds to a mixed Pareto Efficient and envy-free allocation. The proof is inspired by \cite{svensson1983existence, barbanel2005geometry, weller1985fair}.

We next ask how readily one can calculate a Pareto Efficient and envy-free allocation, in terms of the unavoidable communication cost, referred to as the communication complexity of the problem. In particular, we show that even with two players, if their utility functions are submodular, in general, calculating such an allocation will require $\Omega(2^\frac{m}{2})$ bits to be communicated between the players, where $m$ is the number of items to be allocated, which rapidly becomes infeasible as $m$ increases.

\section{Related Work}
A detailed survey on fairness and further background can be found in  \cite{brams1996fair, brandt2016handbook, moulin2004fair}.

Research on fair allocation research dates back to at least \cite{schmeidler1971fair}.
A fair allocation is defined as a Pareto Efficient allocation in which everyone prefers their own bundle to other players' bundles, which is exactly the notion of envy-freeness proposed in \cite{foley1967resource}.

The existence of Pareto Efficient and envy-free allocations has been studied in both the divisible and indivisible cases.

When items are divisible, previous work \cite{varian1973equity, svensson1983existence, diamantaras1992equity, svensson1994sigma, vohra1992equity} showed that Pareto Efficient and envy-free allocations exist under a variety of assumptions, including that utility functions are strictly monotone, continuous, or convex. In contrast, Vohra~\cite{vohra1992equity} showed that when the economy has increasing-marginal-returns, there exist cases such that no Pareto Efficient and envy-free allocation exists. Also, Maniquet~\cite{maniquet1999strong} gave an example with two items and three players for which there is no Pareto Efficient and envy-free allocation.

In the indivisible setting, for the case of mixed allocations, Bogomolnaia and Moulin~\cite{bogomolnaia2001new} introduced the Probabilistic Serial mechanism and showed this new mechanism results in an \emph{ordinally efficient} expected matching which is envy-free in their setting. Ordinal efficiency is a notion which is slightly weaker than Pareto Efficiency. Budish et al.~\cite{budish2013designing} gave a Pareto Efficient and envy-free allocation when the allocation constraints satisfy a bihierarchy assumption which applies to multi-item allocation problems with possibly non-linear utility functions.

For the deterministic case, because of the simple counter-example mentioned above, researchers have proposed many other notions of fairness. The two most closely related notions are \emph{EF1} (\emph{envy-free up to one good}) \cite{budish2011combinatorial} and \emph{EFX} (\emph{envy-free up to any good}) \cite{caragiannis2016unreasonable}. Recall that the idea in the definition of envy-freeness is that each player will compare their bundle to those of the other players. These alternate notions also have players compare their bundle to the other players' bundles, but in \emph{EF1}, players delete their favorite item from the other bundle before doing the comparison, and in  \emph{EFX}, players will not envy another bundle after deleting their least favorite item. Lipton et al.~\cite{lipton2004approximately}  showed that an \emph{EF1} allocation always exists. For the \emph{EFX} allocation, Plaut and Roughgarden~\cite{plaut2018almost} showed that in some situations (utility functions are identical or additive) existence is guaranteed, while for general utility functions, there exist examples such that no \emph{EFX} allocation is Pareto Efficient.

In addition, Dickerson et al.~\cite{dickerson2014computational} showed that if the number of items is at least a logarithmic factor larger than the number of players, then with high probability, an envy-free allocation exists.

Recently, Richter and Rubinstein \cite{richter2018normative} introduced the Normative Equilibrium. They considered when deterministic Pareto Efficient and envy-free allocations exist
in this setting.

Other fairness notions include Nash Social Welfare \cite{caragiannis2016unreasonable, cole2018approximating, cole2016convex}, max-min fairness \cite{le2005rate}, and CEEI \cite{varian1973equity}.

There has been considerable recent work \cite{cole2018approximating, anari2018nash, bei2016computing, birnbaum2011distributed, cheung2019tatonnement, devanur2002market, garg2018approximating, lee2017apx} on the computational complexity of computing the Nash  Social Welfare, both exactly and approximately, for divisible and indivisible items.

Communication complexity has a long history in game theory \cite{hart2007communication, babichenko2017communication, goldberg2014communication, conitzer2004communication, goos2018near}, especially in allocation problem \cite{nisan2002communication, grigorieva2006communication, plaut2019communication, branzei2019communication}. Nisan and Segal \cite{nisan2002communication} studied the communication complexity of maximizing social welfare and the supporting prices. Plaut and Roughgarden \cite{plaut2019communication} focused on the communication complexity of finding each of a deterministic envy-free and a proportional division with indivisible goods. Br{\^a}nzei and Nisan \cite{branzei2019communication} looked at the communication complexity of the cake cutting problem.

\section{Notation and Results}

There are $m$ items and $n$ players. Each player has a utility function $u_i(x_i)$ on each subset $x_i \subseteq \{1, 2, \cdots, m\}$. And there are $k$ allocations, $A^{(1)}, A^{(2)}, \cdots, A^{(k)}$. Allocation $A^{(j)}$ allocates $A^{(j)}_i$ to player $i$, where $A^{(j)}_i \subseteq \{1, 2, \cdots, m\}$ and $A^{(j)}_i \cap A^{(j)}_{h} = \emptyset$ for any $h$ and $i$ with $h \neq i$.

We say the allocation set is \emph{swappable} if and only if for any allocation $A^{(j)}$ and any pair of players $g$ and $h$, there exists an allocation $A^{(l)}$ such that $A^{(j)}_i = A^{(l)}_i$ for any $i \neq \{g, h\}$, $A^{(j)}_{g} = A^{(l)}_{h}$, and $A^{(j)}_{h} = A^{(l)}_{g}$.

We define a mixed allocation to be a probability distribution on the possible allocations: $\mathbf{p} = (p_1, p_2, \cdots, p_k) \in P$ such that $\sum_j p_j = 1$ and $p_j \geq 0$ for any $j$. Given $\mathbf{p}$, player $i$'s expected utility is $\sum_j p_j u_i(A^{(j)}_i)$.

A mixed allocation $\mathbf{p}$ is \emph{Pareto Efficient} (PE) if there is no another mixed allocation $p'$ such that for all $i$, $\sum_j p'_j u_i(A^{(j)}_i) \geq \sum_j p_j u_i(A^{(j)}_i)$ and there exist one $i$ such that this inequality is strict.

A mixed allocation $\mathbf{p}$ is \emph{envy-free} (EF) if for every pair  $i$ and $i'$ of players, $\sum_j p_j u_i(A^{(j)}_i) \geq \sum_j p_j u_i(A^{(j)}_{i'})$.

Now, we present our main theorem.

\begin{theorem}\label{thm::exist}
If the allocation set is swappable, then there exists a Pareto Efficient and envy-free mixed allocation.
\end{theorem}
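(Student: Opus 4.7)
The plan is to apply Kakutani's fixed-point theorem to a correspondence on the product $P\times W$, where $P=\{\mathbf{p}\in\mathbb{R}^k_{\geq 0}:\sum_j p_j=1\}$ is the simplex of mixed allocations and $W=\{\mathbf{w}\in\mathbb{R}^n_{\geq 0}:\sum_i w_i=1\}$ is the weight simplex over players. Writing $U_i(\mathbf{p})=\sum_j p_j u_i(A^{(j)}_i)$ and $U_i^{i'}(\mathbf{p})=\sum_j p_j u_i(A^{(j)}_{i'})$, envy-freeness of $\mathbf{p}$ is the condition $e_i(\mathbf{p}):=\max_{i'}\bigl(U_i^{i'}(\mathbf{p})-U_i(\mathbf{p})\bigr)=0$ for every $i$, while Pareto efficiency will follow from $\mathbf{p}$ maximising a weighted welfare $\sum_i w_i U_i(\mathbf{p})$ with strictly positive weights.

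I would use $F(\mathbf{p},\mathbf{w})=\phi(\mathbf{w})\times\psi(\mathbf{p})$, with $\phi(\mathbf{w})=\arg\max_{\mathbf{p}\in P}\sum_i w_i U_i(\mathbf{p})$ and a weight update $\psi$ that concentrates mass on the most envious players, e.g.\ $\psi(\mathbf{p})=\arg\max_{\mathbf{w}\in W}\sum_i w_i e_i(\mathbf{p})$. Each is a non-empty, convex-valued argmax of a linear functional over a compact convex set and hence upper hemi-continuous by Berge's maximum theorem, so $F$ satisfies the hypotheses of Kakutani's theorem and admits a fixed point $(\mathbf{p}^*,\mathbf{w}^*)$.

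The heart of the argument is to use swappability to turn the fixed-point condition into envy-freeness. Suppose for contradiction that $\bar e:=\max_i e_i(\mathbf{p}^*)>0$; then $\mathbf{w}^*$ is supported on $S=\{i:e_i(\mathbf{p}^*)=\bar e\}$. For any $i\in S$ with $w_i^*>0$, let $f(i)$ be a player realising $i$'s maximum envy. For each $A^{(j)}$ in the support of $\mathbf{p}^*$, swappability supplies a legal allocation $A^{(l)}$ obtained by exchanging the bundles of $i$ and $f(i)$, and optimality of $\mathbf{p}^*$ in $\phi(\mathbf{w}^*)$ forces $\sum_k w_k^* u_k(A^{(l)}_k)\leq\sum_k w_k^* u_k(A^{(j)}_k)$. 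Averaging over $j$ with weights $p_j^*$ gives
\[
w_i^*\,\bar e \;+\; w_{f(i)}^*\bigl(U_{f(i)}^{i}(\mathbf{p}^*)-U_{f(i)}(\mathbf{p}^*)\bigr)\;\leq\;0,
\]
so $w_{f(i)}^*>0$ and thus $f(i)\in S$. Iterating $f$ on the finite set $S\cap\operatorname{supp}(\mathbf{w}^*)$ produces a cycle $i_1\to\cdots\to i_r\to i_1$. The cyclic permutation of their bundles is a product of $r-1$ transpositions, so swappability makes the permuted version of every $A^{(j)}$ in the support legal, and summing the resulting welfare inequalities yields $\bar e\sum_\ell w_{i_\ell}^*\leq 0$, contradicting $\bar e>0$. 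Hence $\mathbf{p}^*$ is envy-free.

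The delicate part, which I expect to be the main obstacle, is securing genuine (not merely weak) Pareto efficiency, since the weight update $\psi$ can return boundary weights and then a weighted-welfare maximiser is only guaranteed to be weakly PE. My plan is to run the same Kakutani argument on the truncated simplex $W_\delta=\{\mathbf{w}\in W:w_i\geq\delta\}$ for each small $\delta>0$: every fixed point $(\mathbf{p}_\delta,\mathbf{w}_\delta)$ is then strictly PE because $\mathbf{w}_\delta$ has all entries $\geq\delta$. Refining the envy-cycle analysis while tracking weight lower bounds along the chain $i\to f(i)\to f^2(i)\to\cdots$ shows $\max_i e_i(\mathbf{p}_\delta)\to 0$ as $\delta\to 0$: if the chain ever stays inside $S_\delta$ long enough to cycle we get the contradiction above, and if it leaves $S_\delta$ after $L\leq n$ steps the $\delta$ bound on the outside weight combines with the accumulated lower bounds to force small envy. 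Extracting a convergent subsequence of $(\mathbf{p}_\delta,\mathbf{w}_\delta)$ produces a limit $\mathbf{p}^*$ which is envy-free by continuity of $e_i$ and Pareto efficient because the set of PE utility vectors in the polytope $U(P)$ is a closed union of faces.
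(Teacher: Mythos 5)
Your proposal is correct in substance but takes a genuinely different route from the paper's. Both arguments apply Kakutani to the product of the mixed-allocation simplex with a truncated weight simplex, and both exploit swappability through the inequality obtained by swapping (or cyclically permuting) bundles inside a weighted-welfare-maximizing mixture. The mechanisms differ, though. The paper updates weights additively by a normalized envy discrepancy and projects back onto $\{\mathbf{w}: w_i\ge\epsilon\}$; exact envy-freeness is then extracted at a single fixed point because $\epsilon$ is chosen below $\rho^n/n$ for an instance-dependent gap $\rho$ defined over the finitely many pure allocations (Claim~\ref{clm::gap}), combined with acyclicity of the envy graph and a monotonicity property of the projection. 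You instead let the weight map be a best response to the envy vector, use the fixed point to build an envy chain $i\to f(i)\to\cdots$ whose members keep positive weight, and kill a cycle with the cyclic-swap inequality; this is a clean and arguably more transparent envy-freeness argument, but on $W_\delta$ your per-step weight factor is $\bar e/M$ rather than a utility-only constant like $\rho$, so a fixed $\delta$ only yields envy of order $\delta^{1/n}$ and you must send $\delta\to 0$ and pass to a limit. There, envy-freeness follows by continuity, but Pareto efficiency of the limit rests on the (true, standard in multiobjective linear programming, but nontrivial and uncited) fact that the efficient set of the polytope $U(P)$ is a closed union of faces --- the one external ingredient the paper's single-fixed-point argument avoids. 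Two details worth tightening: the per-allocation inequality $\sum_k w_k^*u_k(A^{(l)}_k)\le\sum_k w_k^*u_k(A^{(j)}_k)$ for $j$ in the support needs the observation (the paper's Claim~\ref{clm::fixed::1}) that every support allocation itself maximizes weighted welfare, or can be bypassed by applying the swap to the entire mixture at once; and the chain bound should start from a maximally envious player of weight roughly $1/n$, which the structure of $\psi$ at a fixed point on $W_\delta$ does supply since all non-maximal players are pinned at weight $\delta$.
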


Next, we will look at the communication complexity of finding a Pareto Efficient and envy-free allocation. In the communication complexity part, we assume that allocation set includes all possible partitions of the items.

\paragraph{Communication complexity setting}
In this setting, we assume that each player $i$ only knows her own utility function $u_i(\cdot)$. She has no knowledge of others' utility functions. The problem is to determine the minimum number of bits they need to communicate with each other in order to calculate a Pareto Efficient and envy-free allocation, which means that, after communication and individual calculation,  every player will agree on one allocation which is both Pareto Efficient and envy-free.

In this setting, we assume that for every $S$ and $i$, $u_i(S)$ can be represented with a polynomial number of bits. Given this assumption, one easy conclusion is that there always exists a protocol  which uses an exponential number of bits (exponential in the number of items) and outputs an allocation which is Pareto Efficient and envy-free. This is true because there would be no difficulty for all players to calculate one Pareto Efficient and envy-free allocation if each player knows everyone else's utility function. An exponential number of bits of communication suffices to achieve this.

So, our question is if it is possible there exists a protocol that needs only a polynomial number of bits.

We show the following negative result.
\begin{theorem} \label{thm::cc::1}
  Any protocol that calculates Pareto Efficient and envy-free allocation for two players with submodular uility functions needs an exponential number of bits.
\end{theorem}

The definition of a submodular utility function follows.
\begin{definition}
  $u(\cdot)$ is a submudular function if and only if for any $X \subseteq Y$ and any element $e$,
  \begin{align*}
    u(X \cup \{e\}) - u(X) \geq u(Y \cup \{e\}) - u(Y).
  \end{align*}
\end{definition}

\section{Proof of Theorem~\ref{thm::exist}}
WLOG, we assume that $1 \leq u_i(x_i) \leq 2$ for all $i$ and $x_i$.
We will use a fixed point argument. To this end, we construct a mapping from $P \times W$ to itself. Here, $P$ is the set of mixed allocations and $W$ is the set of weighted vectors $\{\mathbf{w} = (w_1, w_2, \cdots, w_n) | \sum_i w_i = 1~\text{and}~ w_i \geq \epsilon \}$. We will specify $\epsilon$ later. Now we construct a mapping $\Phi$ from $(
\mathbf{p}, \mathbf{w})$ to $(\mathcal{P}(\mathbf{w}), \varpi(\mathbf{p}, \mathbf{w}))$, where $\mathcal{P}(\mathbf{w})$ is a subset of $P$ and $\varpi(\mathbf{p}, \mathbf{w}) \in W$.
\begin{align*}
&\mathcal{P}(\mathbf{w}) = \{\mathbf{p}' | \mathbf{p}' \in P ~\text{and}~ \mathbf{p}' \in \arg \max \sum_i w_i \sum_j p'_j u_i(A^{(j)}_i) \}; \\
&\varpi(\mathbf{p}, \mathbf{w}) = \mathtt{proj}_{W}(\nu(\mathbf{p}, \mathbf{w})); \numberthis\label{def::varpi}\\
&\mbox{where~} \nu_i(\mathbf{p}, \mathbf{w}) =  w_i + \frac{\max_{h} \sum_j p_j u_i(A^{(j)}_{h})}{\sum_{i'} \max_{h} \sum_j p_j u_{i'}(A^{(j)}_{h})} -  \frac{\sum_j p_j u_i(A^{(j)}_{i})}{\sum_{i'}  \sum_j p_j u_{i'}(A^{(j)}_{i'})}. \numberthis \label{def::nu}
\end{align*}
\begin{lemma} \label{lem::fixed-point}
There exists a fixed point, $(\mathbf{p}^*, \mathbf{w}^*)$, such that $\mathbf{p}^* \in \mathcal{P}(\mathbf{w}^*)$ and $\varpi(\mathbf{p}^*, \mathbf{w}^*)  = \mathbf{w}^*$.
\end{lemma}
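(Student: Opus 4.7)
The plan is to apply Kakutani's fixed-point theorem to the correspondence $\Phi : P \times W \rightrightarrows P \times W$ defined by $\Phi(\mathbf{p}, \mathbf{w}) = \mathcal{P}(\mathbf{w}) \times \{\varpi(\mathbf{p}, \mathbf{w})\}$. I need to verify that (i) the domain $P \times W$ is nonempty, compact, and convex; (ii) for every input, the image $\Phi(\mathbf{p},\mathbf{w})$ is nonempty and convex; (iii) $\Phi$ has a closed graph, or equivalently, is upper hemicontinuous with closed values. First I would fix any $\epsilon \in (0, 1/n]$ so that the uniform vector $(1/n,\ldots,1/n)$ belongs to $W$; then $P$ is a standard simplex in $\mathbb{R}^k$ and $W$ is a truncated simplex in $\mathbb{R}^n$, making $P\times W$ a nonempty, compact, convex polytope, which settles (i).

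For the $\mathcal{P}(\mathbf{w})$ factor I would rewrite the objective as $\sum_j p'_j\bigl(\sum_i w_i u_i(A^{(j)}_i)\bigr)$, which is linear in $\mathbf{p}' \in P$. Its maximizer set is therefore the convex hull of those vertices $\mathbf{e}_j$ achieving the largest coefficient $c_j(\mathbf{w}) := \sum_i w_i u_i(A^{(j)}_i)$, and so is nonempty, closed, and convex. By Berge's maximum theorem---which applies because the objective is jointly continuous in $(\mathbf{p}',\mathbf{w})$ and the constraint set $P$ does not depend on $\mathbf{w}$---the correspondence $\mathbf{w} \mapsto \mathcal{P}(\mathbf{w})$ is upper hemicontinuous with closed graph.

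For the $\varpi$ factor, the WLOG normalization $1\leq u_i(\cdot)\leq 2$ forces both denominators appearing in (\ref{def::nu}) to be at least $n$, because each inner expression $\sum_j p_j u_{i'}(\cdot)$ is a convex combination of values in $[1,2]$. Consequently $\nu : P\times W \to \mathbb{R}^n$ is a well-defined continuous function (the $\max_h$ inside it is a max of finitely many continuous functions). Composing with the Euclidean projection onto the closed convex polytope $W$, which is the well-defined $1$-Lipschitz map $\mathtt{proj}_W$, yields a continuous function $\varpi : P\times W \to W$; a continuous function is a singleton-valued, hence convex- and closed-valued, upper hemicontinuous correspondence.

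Combining the two factors shows that $\Phi$ meets every hypothesis of Kakutani's theorem, so a fixed point $(\mathbf{p}^*, \mathbf{w}^*)$ exists. I do not expect a genuine obstacle in the lemma itself; the proof reduces to bookkeeping about Berge's theorem and the continuity of a Euclidean projection, plus the observation that $u_i \geq 1$ rules out division by zero. The substantive difficulty in the overall argument for Theorem~\ref{thm::exist} will come afterwards, when one must deduce from $\mathbf{p}^* \in \mathcal{P}(\mathbf{w}^*)$ and $\varpi(\mathbf{p}^*, \mathbf{w}^*) = \mathbf{w}^*$ that the fixed point yields a Pareto-efficient and envy-free mixed allocation; that step will presumably exploit both the swappability hypothesis and the precise form of the correction terms in $\nu$, together with the lower bound $\epsilon$ on weights to rule out degenerate cases at the boundary of $W$.
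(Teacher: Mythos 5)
Your proposal is correct and follows the same route as the paper: both apply Kakutani's theorem to the very same correspondence $\Phi(\mathbf{p},\mathbf{w})=\mathcal{P}(\mathbf{w})\times\{\varpi(\mathbf{p},\mathbf{w})\}$, after noting that its values are nonempty and convex. The only real difference is in how the closed-graph property of $\mathbf{w}\mapsto\mathcal{P}(\mathbf{w})$ is obtained: you invoke Berge's maximum theorem for the linear objective $\sum_j p'_j\sum_i w_i u_i(A^{(j)}_i)$ over the fixed compact set $P$, whereas the paper proves it by hand via Claims~\ref{clm::fixed::2} and~\ref{clm::fixed::3} (convexity and closedness of the set of weights supporting a given mixed allocation, plus a limit argument on the support); both are valid, and yours is the more standard shortcut. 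Your explicit verifications that $\epsilon\leq 1/n$ makes $W$ nonempty and that the denominators in \eqref{def::nu} are bounded below by $n$ (so $\nu$, and hence the projection $\varpi$, is continuous) are details the paper leaves implicit, and they are worth recording.
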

We prove Lemma~\ref{lem::fixed-point} via the following three claims.
\begin{claim}\label{clm::fixed::1}
Let $\mathcal{A}(\mathbf{w}) = \{j | A^{(j)}~\textit{maximizes}~\sum_i w_i u_i(A^{(j)}_i)~\textit{over all allocations}\}$. Then, $\mathcal{P}(\mathbf{w})$ is a simplex on $A(\mathbf{w})$, such that for any $\mathbf{p}'' \in \mathcal{P}(\mathbf{w})$, $p_j'' > 0$ only if $j \in \mathcal{A}(\mathbf{w})$ (and, of course $\sum_j p_j'' = 1$ and $p_j'' \geq 0$):
\begin{align*}
\mathcal{P}(\mathbf{w}) = \left\{\mathbf{p}'' \Bigg| \left(p''_j > 0 \rightarrow  j \in \mathcal{A}(\mathbf{w})\right)\cap \left(\forall j. ~ p''_j \geq 0 \right) \cap \left( \sum_j p''_j = 1\right) \right\}.
\end{align*}
\end{claim}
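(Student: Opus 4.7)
The plan is to recognize that the inner maximization defining $\mathcal{P}(\mathbf{w})$ is a linear program over the probability simplex $P$, and that its optimal set is precisely the face of $P$ spanned by the maximizers of the per-allocation coefficient.

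First I would swap the order of summation in the objective. Writing $V_j(\mathbf{w}) := \sum_i w_i\, u_i(A^{(j)}_i)$ for the $\mathbf{w}$-weighted social welfare of allocation $A^{(j)}$, linearity gives
\begin{align*}
\sum_i w_i \sum_j p'_j\, u_i(A^{(j)}_i) \;=\; \sum_j p'_j V_j(\mathbf{w}).
\end{align*}
Thus the objective is a linear functional of $\mathbf{p}'$ whose coefficients are exactly the quantities $V_j(\mathbf{w})$ that define the set $\mathcal{A}(\mathbf{w})$ in the statement.

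Next I would carry out the standard argument that maximizing a linear functional over the simplex puts all mass on the coordinates attaining the largest coefficient. Let $V^*(\mathbf{w}) := \max_j V_j(\mathbf{w})$. For any $\mathbf{p}' \in P$,
\begin{align*}
\sum_j p'_j V_j(\mathbf{w}) \;\leq\; V^*(\mathbf{w}) \sum_j p'_j \;=\; V^*(\mathbf{w}),
\end{align*}
and since $p'_j \geq 0$ and $V^*(\mathbf{w}) - V_j(\mathbf{w}) \geq 0$, equality in the above holds if and only if $p'_j = 0$ whenever $V_j(\mathbf{w}) < V^*(\mathbf{w})$, i.e., whenever $j \notin \mathcal{A}(\mathbf{w})$. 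On the other hand, $V^*(\mathbf{w})$ is attained, e.g., by any $\mathbf{p}'$ putting unit mass on a single $j^\star \in \mathcal{A}(\mathbf{w})$, so the maximum value of the objective equals $V^*(\mathbf{w})$.

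Combining these two observations gives exactly the claimed description of $\mathcal{P}(\mathbf{w})$: the set of probability vectors supported on $\mathcal{A}(\mathbf{w})$, which is the simplex on $\mathcal{A}(\mathbf{w})$. There is no real obstacle here; the only thing to be careful about is the direction of the bi-implication, which is handled by the nonnegativity of the slacks $V^*(\mathbf{w}) - V_j(\mathbf{w})$ and of the $p'_j$. This characterization will be used in the later claims (and in Lemma~\ref{lem::fixed-point}) because it shows that $\mathcal{P}(\mathbf{w})$ is nonempty, convex, and compact, which are exactly the properties needed to apply the Kakutani fixed-point theorem to the correspondence $(\mathbf{p},\mathbf{w}) \mapsto (\mathcal{P}(\mathbf{w}), \varpi(\mathbf{p},\mathbf{w}))$.
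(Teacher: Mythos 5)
Your proof is correct and follows essentially the same route as the paper: swapping the order of summation to view the objective as the linear functional $\sum_j p'_j \sum_i w_i u_i(A^{(j)}_i)$ over the simplex, so optimal $\mathbf{p}'$ must be supported on the maximizing allocations. Your write-up is just a more explicit version of the paper's one-line argument, spelling out both directions of the characterization.
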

\begin{proof}[Proof of Claim~\ref{clm::fixed::1}]
It's not hard to see that in the definition of $\mathcal{P}(\mathbf{w})$, we can rewrite $\sum_i w_i \sum_j p'_j u_i(A^{(j)}_i)$ as $\sum_j p'_j \sum_i w_i u_i(A^{(j)}_i)$. So any probability $p'_j >0$ on an allocation $A^{(j)}$ that does not maximize $\sum_i w_i u_i(A^{(j)}_i)$ will contradict the definition of $\mathcal{P}(\mathbf{w})$.
\end{proof}
\begin{claim}\label{clm::fixed::2}
Give a mixed allocation $\mathbf{p}' \in P$, the set of $\mathbf{w}$ such that $\mathbf{p}' \in \mathcal{P}(\mathbf{w})$ is a convex closed set.
\end{claim}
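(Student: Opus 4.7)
The plan is to reduce membership in $\mathcal{P}(\mathbf{w})$ to a finite collection of linear inequalities in $\mathbf{w}$, and then observe that a finite intersection of closed half-spaces with $W$ is automatically convex and closed.

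First, I would invoke Claim~\ref{clm::fixed::1} to restate the condition $\mathbf{p}' \in \mathcal{P}(\mathbf{w})$ purely in terms of the support of $\mathbf{p}'$: it is equivalent to requiring that for every $j$ with $p'_j > 0$, the allocation $A^{(j)}$ lies in $\mathcal{A}(\mathbf{w})$, i.e.\ $A^{(j)}$ is a maximizer of $\sum_i w_i u_i(A^{(j)}_i)$ among all allocations. Unfolding the definition of $\mathcal{A}(\mathbf{w})$, this is equivalent to the system of inequalities
\begin{equation*}
\sum_i w_i \bigl(u_i(A^{(j)}_i) - u_i(A^{(l)}_i)\bigr) \geq 0 \quad \text{for every $j$ with $p'_j > 0$ and every $l \in \{1,\ldots,k\}$.}
\end{equation*}

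Now I would simply observe that each of these is a linear (in fact homogeneous linear) inequality in the variable $\mathbf{w}$, so each defines a closed half-space in $\mathbb{R}^n$. The set $W$ itself is also convex and closed (it is a simplex intersected with the half-spaces $w_i \geq \epsilon$). Hence the set of $\mathbf{w}$ for which $\mathbf{p}' \in \mathcal{P}(\mathbf{w})$ is the intersection of $W$ with finitely many closed half-spaces, and is therefore convex and closed.

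I do not anticipate a genuine obstacle here; the claim is essentially a bookkeeping consequence of Claim~\ref{clm::fixed::1} together with the linearity of $\mathbf{w} \mapsto \sum_i w_i u_i(A^{(j)}_i)$. The only thing to be careful about is to quantify over all $l$, not merely all $l$ in some preferred set, so that the maximization condition is captured by the entire finite family of inequalities.
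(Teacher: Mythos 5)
Your proposal is correct and follows essentially the same route as the paper: both characterize $\mathbf{p}' \in \mathcal{P}(\mathbf{w})$ by the support condition that $\sum_i w_i u_i(A^{(j)}_i) \geq \sum_i w_i u_i(A^{(l)}_i)$ for every $j$ with $p'_j > 0$ and every $l$, and then note that these linear inequalities together with the linear constraints defining $W$ cut out a convex closed set. No gap to report.
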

\begin{proof}[Proof of Claim~\ref{clm::fixed::2}]
Given $\mathbf{p}'$, the condition $\mathbf{p}' \in \mathcal{P}(\mathbf{w})$ means that for any allocation $j$ for which $\mathbf{p}'$ has positive probability, its weighted sum of utilities $\sum_i w_i u_i(A^{(j)}_i)$ is no smaller than that for any other allocation. More formally, for any $j$ and $l$ such that $p_j >0$, $\sum_i w_i u_i(A^{(j)}_i) \geq \sum_i w_i u_i(A^{(l)}_i)$. The remaining conditions on $\mathbf{w}$, $\forall i. ~w_i \geq \epsilon$ and $\sum_i w_i = 1$, are also linear. This means the feasible set of $\mathbf{w}$ is a convex closed set.
\end{proof}
\begin{claim}\label{clm::fixed::3}
For any series $(\mathbf{w}^{(t)}, \mathbf{p}^{(t)})$ with $\lim_{t \rightarrow \infty} \mathbf{w}^{(t)} = \mathbf{\overline{w}}$ and $\lim_{t \rightarrow \infty} \mathbf{p}^{(t)} = \mathbf{\overline{p}}$, if for every $t$, $\mathbf{p}^{(t)} \in \mathcal{P}(\mathbf{w}^{(t)})$, then $\mathbf{\overline{p}} \in \mathcal{P}(\mathbf{\overline{w}})$. In other words, $\mathcal{P}(\overline{\mathbf{w}})$ has a closed graph.
\end{claim}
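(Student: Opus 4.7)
The plan is a straightforward upper-hemicontinuity argument exploiting the fact that the objective
\[
f(\mathbf{p}, \mathbf{w}) \;:=\; \sum_i w_i \sum_j p_j\, u_i\bigl(A^{(j)}_i\bigr)
\]
is bilinear, hence jointly continuous, in $(\mathbf{p}, \mathbf{w})$, and that the feasible set $P$ does not depend on $\mathbf{w}$.

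First I would note that $\overline{\mathbf{p}} \in P$, since $P$ is a closed simplex and $\mathbf{p}^{(t)} \in P$ for every $t$. The substantive step is to verify that $\overline{\mathbf{p}}$ maximizes $f(\cdot, \overline{\mathbf{w}})$ over $P$. Fix an arbitrary $\mathbf{q} \in P$. By the hypothesis $\mathbf{p}^{(t)} \in \mathcal{P}(\mathbf{w}^{(t)})$, for every $t$ we have $f(\mathbf{p}^{(t)}, \mathbf{w}^{(t)}) \geq f(\mathbf{q}, \mathbf{w}^{(t)})$. Passing to the limit along the two convergent sequences using the continuity of $f$, this becomes $f(\overline{\mathbf{p}}, \overline{\mathbf{w}}) \geq f(\mathbf{q}, \overline{\mathbf{w}})$. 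Since $\mathbf{q} \in P$ was arbitrary, $\overline{\mathbf{p}}$ is a maximizer of $f(\cdot, \overline{\mathbf{w}})$ over $P$, i.e.\ $\overline{\mathbf{p}} \in \mathcal{P}(\overline{\mathbf{w}})$, which is exactly the closed-graph conclusion.

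There is no real obstacle here: the argument uses only continuity of $f$ (in fact bilinearity in the finite-dimensional variables $\mathbf{p}$ and $\mathbf{w}$) together with closedness of $P$ and its independence from $\mathbf{w}$. One conceptual point worth flagging is that at the limit $\overline{\mathbf{w}}$ the set $\mathcal{P}(\overline{\mathbf{w}})$ may be strictly larger than any limit of the sets $\mathcal{P}(\mathbf{w}^{(t)})$, because additional pure allocations can tie for the maximum of $\sum_i \overline{w}_i u_i(A^{(j)}_i)$ only at $\overline{\mathbf{w}}$ (cf.\ Claim~\ref{clm::fixed::1}); this is consistent with, rather than an obstruction to, the closed-graph property, and it is precisely why we need the Kakutani (set-valued) rather than the Brouwer (single-valued) fixed-point theorem for the overall argument.
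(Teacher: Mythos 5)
Your proof is correct, but it takes a different route from the paper's. You give the standard Berge-type upper-hemicontinuity argument: since the objective $f(\mathbf{p},\mathbf{w})=\sum_i w_i\sum_j p_j u_i(A^{(j)}_i)$ is bilinear, hence jointly continuous, and the feasible set $P$ is a fixed closed simplex independent of $\mathbf{w}$, you compare $\mathbf{p}^{(t)}$ against an arbitrary fixed $\mathbf{q}\in P$ and pass to the limit in the inequality $f(\mathbf{p}^{(t)},\mathbf{w}^{(t)})\geq f(\mathbf{q},\mathbf{w}^{(t)})$. The paper instead argues through the combinatorial structure established in Claims~\ref{clm::fixed::1} and~\ref{clm::fixed::2}: it fixes the support $\mathcal{S}$ of $\overline{\mathbf{p}}$, notes that $\mathbf{p}^{(t)}_j>0$ on $\mathcal{S}$ for all large $t$, deduces that each $A^{(j)}$ with $j\in\mathcal{S}$ maximizes $\sum_i w^{(t)}_i u_i(A^{(j)}_i)$, and then uses closedness of the set of weights supporting a given mixed allocation (Claim~\ref{clm::fixed::2}) to conclude $\overline{\mathbf{w}}$ lies in that set, whence $\overline{\mathbf{p}}\in\mathcal{P}(\overline{\mathbf{w}})$ by Claim~\ref{clm::fixed::1}. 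Your argument is shorter and more general (it needs only joint continuity of the objective and closedness of a $\mathbf{w}$-independent feasible set, not the finite-support or polyhedral structure), while the paper's version reuses its earlier claims and makes explicit which pure allocations remain optimal along the sequence. Your closing remark that $\mathcal{P}(\overline{\mathbf{w}})$ can be strictly larger than the limiting sets, and that this is why a set-valued (Kakutani) fixed-point theorem is needed, is also accurate.
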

\begin{proof}[Proof of Claim~\ref{clm::fixed::3}]
Consider the set $\mathcal{S} = \left\{j | \mathbf{\overline{p}}_j > 0 \right\}$. By Claim~\ref{clm::fixed::1}, we only need to show that $A^{(j)}$ maximizes $\sum_i \overline{w}_i u_i(A_i^{(j)})$ for all $j \in \mathcal{S}$. We observe that there exists a $t'$ such that $\mathbf{p}^{(t)}_{j} > 0$ for all $j \in \mathcal{S}$ and all $t \geq t'$, as $\lim_{t \rightarrow \infty} \mathbf{p} = \mathbf{\overline{p}}$ and the dimension of $\mathbf{p}$, which is the number of allocations,  is finite. Consider the set $\mathcal{W}$ of $\mathbf{w}$ such that $A^{(j)}$ maximizes $\sum_i w_i u_i(A_i^{(j)})$  for all $j \in \mathcal{S}$. Then, we only need to show that $\mathbf{\bar{w}} \in \mathcal{W}$. Since $\mathbf{p}^{(t)} \in \mathcal{P}(\mathbf{w}^{(t)})$, $\mathbf{p}^{(t)} \in \arg \max \sum_j p_j^{(t)} \sum_i w_i^{(t)} u_i(A_i^{(j)})$. By our observation, $p_j^{(t)} > 0$ for $t > t'$ and $j \in \mathcal{S}$; thus $A^{(j)}$ maximizes $\sum_i w_i^{(t)}u_i(A^{(j)}_i)$ for all $j \in \mathcal{S}$. This implies $\mathbf{w}^{(t)} \in \mathcal{W} $ for any $t \geq t'$. Furthermore, $\mathcal{W}$ is convex and closed by Claim~\ref{clm::fixed::2}, which implies $\mathbf{\overline{w}} \in \mathcal{W}$.  By Claim~\ref{clm::fixed::1}, $\mathbf{\overline{p}} \in \mathcal{P}(\mathbf{\overline{w}})$.
\end{proof}
\begin{proof}[Proof of Lemma~\ref{lem::fixed-point}]
  For each $(\mathbf{p}, \mathbf{w})$, the mapping $\Phi(\mathbf{p},\mathbf{w}) = (\mathcal{P}(\mathbf{w}), \varpi(\mathbf{p}, \mathbf{w}))$ is a convex set as $\mathcal{P}(\mathbf{w})$ is convex by Claim~\ref{clm::fixed::1} and $\varpi(\mathbf{p}, \mathbf{w})$ is a single point. It is non-empty as $\mathcal{A}(\mathbf{w})$ is non-empty, and by Claim~\ref{clm::fixed::1} the corresponding $\mathcal{P}(\mathbf{w})$ is non-empty. By Claim~\ref{clm::fixed::3}, $\Phi$ has a closed graph. Hence Kakutani's fixed point theorem can be applied.
\end{proof}
We now conclude that for any fixed point $(\mathbf{p}^*, \mathbf{w}^*)$, $\mathbf{p}^*$ is a Pareto Efficient allocation.
\begin{claim}\label{clm::PE}
If $(\mathbf{p}^*, \mathbf{w}^*)$ is a fixed point of $\Phi$, then $\mathbf{p}^*$ is a Pareto Efficient allocation.
\end{claim}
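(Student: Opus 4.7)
The plan is to exploit the fact that $\mathbf{p}^* \in \mathcal{P}(\mathbf{w}^*)$ means $\mathbf{p}^*$ maximizes the \emph{strictly positively weighted} utilitarian objective $\sum_i w_i^* \sum_j p_j u_i(A^{(j)}_i)$, and that weights $\mathbf{w}^*$ lying in $W$ are bounded below by $\epsilon > 0$. With strictly positive weights, a Pareto improvement over $\mathbf{p}^*$ would strictly increase this weighted sum, contradicting the maximality. This is the classical linear-scalarization argument for Pareto efficiency, and it uses only the first coordinate of the fixed-point condition; the second coordinate (involving $\varpi$) is not needed here.

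More concretely, I would proceed as follows. First, record that since $\mathbf{w}^* \in W$, we have $w_i^* \geq \epsilon > 0$ for every player $i$. Second, invoke the fixed-point property $\mathbf{p}^* \in \mathcal{P}(\mathbf{w}^*)$, which by definition gives
\begin{equation*}
\sum_i w_i^* \sum_j p_j^* u_i(A^{(j)}_i) \;\geq\; \sum_i w_i^* \sum_j p_j' u_i(A^{(j)}_i) \quad \text{for every } \mathbf{p}' \in P.
\end{equation*}
Third, argue by contradiction: suppose some $\mathbf{p}' \in P$ Pareto-dominates $\mathbf{p}^*$, i.e.\ $\sum_j p_j' u_i(A^{(j)}_i) \geq \sum_j p_j^* u_i(A^{(j)}_i)$ for all $i$ with strict inequality for some $i_0$. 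Multiplying the $i$-th inequality by $w_i^* > 0$ and summing over $i$, the strict inequality for $i_0$ is preserved (because $w_{i_0}^* \geq \epsilon > 0$), yielding $\sum_i w_i^* \sum_j p_j' u_i(A^{(j)}_i) > \sum_i w_i^* \sum_j p_j^* u_i(A^{(j)}_i)$, which contradicts the optimality displayed above. Hence no such $\mathbf{p}'$ exists, and $\mathbf{p}^*$ is Pareto Efficient.

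I do not anticipate any real obstacle here; the only subtlety worth flagging in the write-up is the role of the lower bound $\epsilon$ on the weights, which is precisely why one restricts $W$ to weights bounded away from zero rather than allowing the whole simplex. Without that, a zero weight on $i_0$ would break the strict-inequality step, and Pareto efficiency could fail. The rest is bookkeeping, and no appeal to swappability or to the construction of $\varpi$ is required for this claim.
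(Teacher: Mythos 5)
Your argument is correct and is essentially the paper's own proof: both use that $\mathbf{p}^* \in \mathcal{P}(\mathbf{w}^*)$ means $\mathbf{p}^*$ maximizes the weighted sum $\sum_i w_i^* \sum_j p_j u_i(A^{(j)}_i)$, so a Pareto-dominating $\mathbf{p}'$ would contradict this maximality. You simply make explicit the role of $w_i^* \geq \epsilon > 0$ in preserving the strict inequality, which the paper leaves implicit.
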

\begin{proof}
Note that the fact that $\mathbf{p}^* \in \mathcal{P}(\mathbf{w^*})$ means $\mathbf{p}^*$ maximizes $\sum_i w^*_i \sum_j p^*_j u_i(A^{(j)}_i)$. So, there cannot be another $\mathbf{p}$ such that for every $i$, $\sum_j p_j u_i(A^{(j)}_i) \geq \sum_j p^*_j u_i(A^{(j)}_i)$, with the inequality being strict for some $i$.
\end{proof}
The following two lemmas prove Theorem~\ref{thm::exist}. Recall that the definition of $\varpi(\cdot)$ and $\nu(\cdot)$ are given in \eqref{def::varpi} and \eqref{def::nu}.
\begin{lemma} \label{lem::IN}
If $\nu(\mathbf{p}^*, \mathbf{w}^*) \in W$, then $\mathbf{p}^*$ is a Pareto Efficient and envy-free allocation.
\end{lemma}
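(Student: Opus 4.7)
The plan is to exploit the hypothesis $\nu(\mathbf{p}^*, \mathbf{w}^*) \in W$ to remove the projection $\mathtt{proj}_W$ from the fixed-point condition $\varpi(\mathbf{p}^*, \mathbf{w}^*) = \mathbf{w}^*$. This yields the coordinate-wise equation $\nu_i(\mathbf{p}^*, \mathbf{w}^*) = w_i^*$ for every $i$. Writing $M_i := \max_h \sum_j p^*_j u_i(A^{(j)}_h)$ and $U_i := \sum_j p^*_j u_i(A^{(j)}_i)$, this equation collapses the $w_i^*$ terms on both sides and leaves the proportionality $M_i / \sum_{i'} M_{i'} = U_i / \sum_{i'} U_{i'}$ for each $i$. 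Equivalently, the ratio $c := M_i / U_i$ is one and the same constant across all players $i$, and the bound $M_i \geq U_i$ (since $U_i$ is itself one of the terms over which $M_i$ maximizes) forces $c \geq 1$.

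Pareto Efficiency is already handed to us by Claim~\ref{clm::PE}, so what remains is envy-freeness, i.e.\ $M_i = U_i$ for every $i$, which is precisely $c = 1$. I would prove this by contradiction. Suppose $c > 1$; then $M_i > U_i$ for every $i$, so for each player $i$ the maximizer $h(i) \in \arg\max_h \sum_j p^*_j u_i(A^{(j)}_h)$ is distinct from $i$. Iterating the map $h$ on the finite set of players, one extracts a directed cycle $i_1 \to i_2 \to \cdots \to i_k \to i_1$ of length $k \geq 2$.

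The key step is to use swappability to turn this envy cycle into a Pareto improvement. Since any single pairwise swap of allocated bundles stays in the allocation set and this property is closed under iteration, the allocation set is in fact closed under arbitrary permutations of players applied to any fixed allocation; in particular the cyclic shift that hands the bundle $A^{(j)}_{i_{t+1}}$ to player $i_t$ for each $t$ (indices mod $k$) and leaves the other players fixed is legal. Construct $\mathbf{p}'$ by applying this single cyclic shift to every allocation in the support of $\mathbf{p}^*$ and transferring the probability weights accordingly. Under $\mathbf{p}'$, player $i_t$ has expected utility $\sum_j p^*_j u_{i_t}(A^{(j)}_{h(i_t)}) = M_{i_t} > U_{i_t}$ for every $t$, while all players outside the cycle have their utilities unchanged. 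Hence $\mathbf{p}'$ strictly Pareto dominates $\mathbf{p}^*$, contradicting Pareto Efficiency. Therefore $c = 1$ and envy-freeness follows.

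The main obstacle is seeing why swappability suffices: a naive pairwise swap of $i$ with $h(i)$ benefits $i$ but typically harms $h(i)$, so pairwise reasoning alone cannot produce a Pareto improvement. The essential idea is that one must use iterated swaps along an envy cycle, where the proportionality $M_i/U_i = c > 1$ for \emph{all} players guarantees that every player involved in the cyclic shift gains strictly; this is what turns the equality of ratios extracted from $\nu = \mathbf{w}^*$ into envy-freeness under the swappability assumption.
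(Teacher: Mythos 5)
Your proof is correct, and it reaches envy-freeness by a somewhat different decomposition than the paper. The first half is identical: since $\nu(\mathbf{p}^*,\mathbf{w}^*)\in W$, the projection is the identity there, so the fixed-point condition gives $\nu(\mathbf{p}^*,\mathbf{w}^*)=\mathbf{w}^*$ and hence $M_i/\sum_{i'}M_{i'}=U_i/\sum_{i'}U_{i'}$ for all $i$, with Pareto Efficiency supplied by Claim~\ref{clm::PE}. From there the paper proceeds via its envy-graph machinery: Claim~\ref{clm::graph::acyclic} (swappability makes the envy graph acyclic), Claim~\ref{clm::no::envy::set} (so some player is envy-free), and the equality case of Claim~\ref{clm::less::ineq} applied to that player. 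You instead observe that the proportionality forces a common ratio $c=M_i/U_i=\sum_{i'}M_{i'}/\sum_{i'}U_{i'}\geq 1$ (legitimate since $U_i\geq 1>0$ under the paper's normalization), and rule out $c>1$ directly: every player then strictly prefers some other player's bundle in expectation, the argmax map yields a cycle, and a cyclic shift of bundles along that cycle --- legal because swappability iterated over transpositions closes the allocation set under permutations --- produces a mixed allocation strictly dominating $\mathbf{p}^*$, contradicting Claim~\ref{clm::PE}. This is the same underlying mechanism as the paper's Claim~\ref{clm::graph::acyclic} (cycle exchange contradicts PE), just used in contrapositive and made self-contained, which lets you bypass Claims~\ref{clm::less::ineq} and~\ref{clm::no::envy::set} entirely; the paper's decomposition, by contrast, factors the argument into lemmas it reuses for Lemma~\ref{lem::not::in}, where your common-ratio shortcut is not available because there the equality of ratios fails and one genuinely needs an envy-free player with the strict inequality of Claim~\ref{clm::less::ineq}.
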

\begin{lemma} \label{lem::not::in}
if $\nu(\mathbf{p}^*, \mathbf{w}^*)$ is not in $W$, then $w^* \neq \varpi(\mathbf{p}^*, \mathbf{w}^*)$.
\end{lemma}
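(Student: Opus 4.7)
My plan is to argue by contradiction: suppose $\varpi(\mathbf{p}^*,\mathbf{w}^*)=\mathbf{w}^*$ while $\nu(\mathbf{p}^*,\mathbf{w}^*)\notin W$, and derive a numerical impossibility by combining the KKT characterization of the projection with Pareto efficiency plus swappability of $\mathbf{p}^*$. Write $X_i:=\max_h\sum_j p^*_j u_i(A^{(j)}_h)$, $Y_i:=\sum_j p^*_j u_i(A^{(j)}_i)$, $\tilde{X}:=\sum_i X_i$, and $\tilde{Y}:=\sum_i Y_i$, so that the two fractions appearing in the definition of $\nu_i$ are $X_i/\tilde{X}$ and $Y_i/\tilde{Y}$; in particular $\sum_i(X_i/\tilde{X})=\sum_i(Y_i/\tilde{Y})=1$, whence $\sum_i\nu_i=\sum_i w^*_i=1$. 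The projection $\mathtt{proj}_W(\nu)$ is therefore Euclidean projection within the hyperplane $\{\sum_i x_i=1\}$ onto the box $\{x:x_i\ge\epsilon\}$, and the KKT conditions for $\mathbf{w}^*$ being this projection supply a scalar $\lambda$ and non-negative multipliers $\mu_i$ (with $\mu_i=0$ whenever $w^*_i>\epsilon$) satisfying $\delta_i:=\nu_i-w^*_i=\lambda-\mu_i$. Setting $S_>:=\{i:w^*_i>\epsilon\}$ and $S_=:=\{i:w^*_i=\epsilon\}$, we then have $\delta_i=\lambda$ for $i\in S_>$, $\delta_i\le\lambda$ for $i\in S_=$, and $\sum_i\delta_i=0$ forces $\sum_{i\in S_=}\mu_i=n\lambda$. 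If $\lambda\le 0$, all $\mu_i=0$ and hence $\nu=\mathbf{w}^*\in W$, contradicting $\nu\notin W$; so $\lambda>0$.

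Taking the $\mathbf{w}^*$-weighted sum of the KKT identity telescopes to
\begin{align*}
\sum_i w^*_i\delta_i=\sum_i w^*_i(\lambda-\mu_i)=\lambda-\epsilon\sum_{i\in S_=}\mu_i=\lambda(1-n\epsilon),
\end{align*}
which is strictly positive once I pin down $\epsilon<1/n$, the $\epsilon$ whose specification was postponed at the beginning of the proof of Theorem~\ref{thm::exist}. I would then aim to match this with the opposing bound $\sum_i w^*_i\delta_i\le 0$, extracted from Pareto efficiency plus swappability. Since $\mathbf{p}^*\in\mathcal{P}(\mathbf{w}^*)$ and swapping bundles of any two players in an allocation in the support of $\mathbf{p}^*$ is admissible yet cannot strictly improve the objective $\sum_k w^*_k Y_k$, for every pair $(i,h)$ one has the swap inequality
\[ w^*_i\Bigl(Y_i-\sum_j p^*_j u_i(A^{(j)}_h)\Bigr)+w^*_h\Bigl(Y_h-\sum_j p^*_j u_h(A^{(j)}_i)\Bigr)\ge 0. \]
Substituting $\delta_i=(X_i\tilde{Y}-Y_i\tilde{X})/(\tilde{X}\tilde{Y})$, the target inequality rearranges to $\tilde{Y}\sum_i w^*_i X_i\le\tilde{X}\sum_i w^*_i Y_i$. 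The plan is to aggregate the swap inequalities along cycles of the envy graph $i\mapsto\arg\max_h\sum_j p^*_j u_i(A^{(j)}_h)$, leveraging the structural identity $X_j=(\tilde{X}/\tilde{Y})Y_j+\lambda\tilde{X}$ for $j\in S_>$ (which falls out of $\delta_j=\lambda$) and the normalization $1\le u_i\le 2$ to absorb the residuals.

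The main obstacle is exactly this last aggregation: the swap inequality is pair-local whereas the target involves $\max_h$, an operator that does not commute with summation, and the envy graph need not decompose into disjoint cycles when several players share a preferred envy target. Managing these many-to-one pieces while keeping residuals under control via the $u\in[1,2]$ normalization and a sufficiently small further choice of $\epsilon$ is the delicate part of the proof. Once the matching bound is in place, combining it with $\sum_i w^*_i\delta_i=\lambda(1-n\epsilon)$ yields $\lambda(1-n\epsilon)\le 0$, contradicting $\lambda>0$ and $\epsilon<1/n$, and concluding that $\nu\notin W$ is incompatible with $\varpi(\mathbf{p}^*,\mathbf{w}^*)=\mathbf{w}^*$.
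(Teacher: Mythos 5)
Your KKT bookkeeping is sound (it reproduces, in different notation, what the paper proves in Claim~\ref{clm::proj}), but the step you yourself flag as ``the delicate part'' is not merely delicate --- the bound you are hoping to extract is false. The inequality $\sum_i w^*_i\delta_i\le 0$, equivalently $\tilde{Y}\sum_i w^*_iX_i\le\tilde{X}\sum_i w^*_iY_i$, is \emph{not} a consequence of Pareto efficiency plus swappability (i.e.\ of $\mathbf{p}^*\in\mathcal{P}(\mathbf{w}^*)$). Counterexample: three players with weights $w_1=1-2\sigma$, $w_2=w_3=\sigma$; bundles $a_1,a_2,a_3$ with the allocation set consisting of all permutations of these bundles; utilities $u_1(a_1)=2-\sigma$, $u_1(a_2)=2$, $u_1(a_3)=1$, $u_2(a_2)=u_3(a_3)=2$, and all remaining values equal to $1$ (so all utilities lie in $[1,2]$). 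For small $\sigma$ the identity allocation is the unique maximizer of $\sum_i w_iu_i$, so the point mass on it is in $\mathcal{P}(\mathbf{w})$; player $1$ envies player $2$ by $\sigma$ and players $2,3$ envy no one. Then $X_1=X_2=X_3=2$, $Y_1=2-\sigma$, $Y_2=Y_3=2$, and $\sum_i w_i\delta_i=\tfrac{2}{6}-\tfrac{2-\sigma+2\sigma^2}{6-\sigma}=\tfrac{4\sigma(1-3\sigma)}{6(6-\sigma)}>0$. The structural reason is exactly the configuration your swap inequalities permit: a heavy player may envy a light player by a small amount while the light players envy nobody, and then weighting the envy terms by $\mathbf{w}$ \emph{increases} the relative envy rather than decreasing it. Whether $\sum_i w^*_i\delta_i\le0$ holds at an actual fixed point is precisely what the lemma must establish, so it cannot be assumed, and no aggregation of pairwise swap inequalities along the envy graph will deliver it from PE and swappability alone.

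What is missing is the paper's localized argument and its $\rho$-dependent choice of $\epsilon$. Note that in your own setup, $\lambda>0$ forces $\nu_i>w^*_i$ for \emph{every} player with $w^*_i>\epsilon$, which by Claim~\ref{clm::less::ineq} means every such player envies someone. The paper contradicts this by producing a single non-envying player $i^*\in\mathcal{I}(\mathbf{p}^*)$ with $w^*_{i^*}>\epsilon$ (Claim~\ref{clm::exist::no::envy}); for that player the strict form of Claim~\ref{clm::less::ineq} gives $\nu_{i^*}<w^*_{i^*}$, and Claim~\ref{clm::proj} then yields $\varpi_{i^*}\le\max\{\nu_{i^*},\epsilon\}<w^*_{i^*}$, so $\varpi\neq\mathbf{w}^*$. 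Producing that player is where the real work lies: it needs Claim~\ref{clm::gap} (if $w_h\le\rho w_i$ then $i$ does not envy $h$, via a swap argument), acyclicity of the envy graph (Claim~\ref{clm::graph::acyclic}), and the choice $\epsilon<\rho^n/n$, which is strictly stronger than your $\epsilon<1/n$. Your plan invokes neither $\rho$ nor the envy-graph structure, and with only $\epsilon<1/n$ nothing prevents all non-envying players from sitting at the boundary $w^*_i=\epsilon$ (as in the example above with $\epsilon$ close to $\sigma$), which is exactly the situation the paper's smaller $\epsilon$ rules out. So the proposal as it stands has a genuine gap at its central step.
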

\begin{proof}[Proof of Theorem~\ref{thm::exist}]
By Lemma~\ref{lem::fixed-point}, the mapping $\Phi$ has a fixed point $(\mathbf{p}^*, \mathbf{w}^*)$. By Lemma~\ref{lem::not::in}, $\nu(\mathbf{p}^*, \mathbf{w}^*) \in W$, since otherwise $(\mathbf{p}^*, \mathbf{w}^*)$ would not be a fixed point. The theorem now follows from Lemma~\ref{lem::IN}.
\end{proof}

Next, we start to prove Lemma~\ref{lem::IN}. We first construct an \emph{envy graph}  $(V, E)$ based on $\mathbf{p}^*$. $V$ is the set of players and the directed edge from $i$ to $i'$, $(i, i')$, belongs to $E$ if and only if $i$ envies $i'$, which means $\sum_j p_j^* u_i(A^{(j)}_i) < \sum_j p_j^* u_i(A^{(j)}_{i'})$.
\begin{claim} \label{clm::graph::acyclic}
Let $\mathbf{p}^*$ be a Pareto Efficient mixed allocation. The corresponding envy graph is acyclic.
\end{claim}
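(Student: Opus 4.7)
The plan is to prove the contrapositive: if the envy graph of $\mathbf{p}^*$ contains a directed cycle, then $\mathbf{p}^*$ fails to be Pareto Efficient. Suppose the cycle is $i_1 \to i_2 \to \cdots \to i_k \to i_1$, so for each $t$ (with the convention $i_{k+1} = i_1$) the definition of the envy edge gives
\begin{align*}
\sum_{j} p_j^*\, u_{i_t}(A^{(j)}_{i_{t+1}}) \;>\; \sum_{j} p_j^*\, u_{i_t}(A^{(j)}_{i_t}).
\end{align*}
The idea is to cyclically rotate bundles along the cycle inside every pure allocation in the support of $\mathbf{p}^*$: reassign the bundle $A^{(j)}_{i_{t+1}}$ to player $i_t$, and leave every non-cycle player's bundle alone.

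To make this rigorous I would let $\sigma$ be the permutation of the players sending $i_t \mapsto i_{t+1}$ and fixing the remaining players, and decompose $\sigma$ as a product of $k-1$ transpositions. Starting from any allocation $A^{(j)}$ in the support of $\mathbf{p}^*$, I apply those transpositions one at a time; the swappability hypothesis keeps each intermediate allocation legal, so the fully rotated allocation coincides with some $A^{(j')}$ in the given allocation set. Define a new mixed allocation $\mathbf{p}'$ by transferring each mass $p_j^*$ to the corresponding rotated index $j'$ (summing if several rotations collide).

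Finally, compare utilities under $\mathbf{p}'$ and $\mathbf{p}^*$. Non-cycle players hold exactly the same bundles in every realization, so their expected utilities are unchanged. Each cycle player $i_t$ now has expected utility $\sum_j p_j^*\, u_{i_t}(A^{(j)}_{i_{t+1}})$, which by the displayed inequality strictly exceeds $\sum_j p_j^*\, u_{i_t}(A^{(j)}_{i_t})$. Hence $\mathbf{p}'$ strictly Pareto dominates $\mathbf{p}^*$, a contradiction, and the envy graph must be acyclic. I expect the only genuinely subtle step to be the invocation of swappability: the hypothesis provides closure only under single transpositions, so one has to notice that iterating single swaps realizes the entire rotation $\sigma$ inside the allocation set; everything else is a routine expected-utility comparison.
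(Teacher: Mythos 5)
Your argument is correct and is essentially the paper's own proof, which simply says that a cycle in the envy graph lets one exchange bundles along the cycle (legally, by swappability) to strictly improve every cycle member while leaving others unchanged, contradicting Pareto Efficiency. Your write-up just fills in the details the paper leaves implicit, notably realizing the cyclic rotation as a sequence of pairwise swaps so that each rotated allocation stays in the allocation set.
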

\begin{proof}[Proof of Claim~\ref{clm::graph::acyclic}]
If the graph has a cycle, then we can improve everyone's utility functions in this cycle by exchanging the allocations along the cycle, contradicting Pareto Efficiency. This argument is using the swappable property.
\end{proof}
Given the Pareto Efficient mixed allocation $\mathbf{p}^*$, we define the set of envy-free players to be $\mathcal{I}(\mathbf{p}^*) = \{ i | \textit{player i does not envy player h for all h} \}$.
\begin{claim} \label{clm::no::envy::set}
Suppose $\mathbf{p}^*$ be a Pareto Efficient mixed allocation. Then $\mathcal{I}(\mathbf{p}^*)$  is not empty.
\end{claim}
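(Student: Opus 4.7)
The plan is to leverage Claim~\ref{clm::graph::acyclic}, which tells us the envy graph associated with the Pareto Efficient mixed allocation $\mathbf{p}^*$ is a finite directed acyclic graph (DAG). Since an edge $(i,i')$ encodes the fact that player $i$ envies player $i'$, the set $\mathcal{I}(\mathbf{p}^*)$ of envy-free players is precisely the set of vertices with out-degree zero in this envy graph, i.e., the sinks.

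The key structural observation is then the standard fact that any finite DAG must contain at least one sink. I would prove this by contradiction: if every vertex had out-degree at least $1$, then starting from any vertex and repeatedly following an outgoing edge would produce an infinite walk in a finite graph, which must eventually revisit a vertex and thereby exhibit a directed cycle, contradicting Claim~\ref{clm::graph::acyclic}. Since the envy graph has finitely many vertices (one per player), this argument shows that at least one sink exists.

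Translating back, any sink $i^*$ has no outgoing envy edge, which by definition means $\sum_j p_j^* u_{i^*}(A^{(j)}_{i^*}) \geq \sum_j p_j^* u_{i^*}(A^{(j)}_h)$ for every $h$, so $i^* \in \mathcal{I}(\mathbf{p}^*)$ and the set is non-empty. I do not foresee any real obstacle here; the content of the claim is entirely carried by the acyclicity already established, and the only thing to be careful about is explicitly invoking finiteness of the player set (so that "keep following an outgoing edge" must close into a cycle) and lining up the sink condition with the exact wording of the definition of $\mathcal{I}(\mathbf{p}^*)$.
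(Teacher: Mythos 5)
Your proposal is correct and matches the paper's argument, which likewise derives non-emptiness of $\mathcal{I}(\mathbf{p}^*)$ directly from the acyclicity of the envy graph (Claim~\ref{clm::graph::acyclic}); you have simply spelled out the standard finite-DAG-has-a-sink step that the paper leaves implicit.
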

\begin{proof}[Proof of Claim~\ref{clm::no::envy::set}]
This follows from the fact that the envy graph is acyclic.
\end{proof}

\begin{claim}\label{clm::less::ineq}
Let $\mathbf{p}^*$ be a Pareto Efficient mixed allocation. Then,  for any $i$ in $\mathcal{I}(\mathbf{p}^*)$,
$$\frac{\max_{h} \sum_j p_j u_i(A^{(j)}_{h})}{\sum_{i'} \max_{h} \sum_j p_j u_{i'}(A^{(j)}_{h})} \leq  \frac{\sum_j p_j u_i(A^{(j)}_{i})}{\sum_{i'}  \sum_j p_j u_{i'}(A^{(j)}_{i'})},$$
and  equality holds if and only if $\mathbf{p}^*$ is envy-free.
\end{claim}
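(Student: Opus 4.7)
The plan is to observe that for $i \in \mathcal{I}(\mathbf{p}^*)$ the two numerators in the displayed inequality are already equal, so the entire claim reduces to comparing the two denominators term by term. Specifically, by the definition of $\mathcal{I}(\mathbf{p}^*)$, player $i$ weakly prefers her own bundle to every other player's bundle under $\mathbf{p}^*$, and hence
\[
\max_{h} \sum_j p^*_j u_i(A^{(j)}_{h}) \;=\; \sum_j p^*_j u_i(A^{(j)}_{i}).
\]
So I would first establish this identity as Step 1.

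Step 2 is a term-by-term comparison of the denominators: for every player $i'$, trivially $\max_h \sum_j p^*_j u_{i'}(A^{(j)}_h) \geq \sum_j p^*_j u_{i'}(A^{(j)}_{i'})$, since the max ranges over all $h$ and in particular includes $h = i'$. Summing over $i'$ gives
\[
\sum_{i'} \max_h \sum_j p^*_j u_{i'}(A^{(j)}_h) \;\geq\; \sum_{i'} \sum_j p^*_j u_{i'}(A^{(j)}_{i'}).
\]
Since the normalisation $1 \leq u_i \leq 2$ makes both denominators strictly positive and the numerators equal, dividing yields the desired inequality; this is Step 3.

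For the equality characterization (Step 4), note the inequality in Step 2 is strict as soon as any single player $i'$ strictly envies some other player, because then the $i'$-th term on the left is strictly larger than the $i'$-th term on the right. Conversely, if $\mathbf{p}^*$ is envy-free, then for every $i'$, $\max_h \sum_j p^*_j u_{i'}(A^{(j)}_h) = \sum_j p^*_j u_{i'}(A^{(j)}_{i'})$, so the denominators agree and equality holds. This gives both directions of the ``if and only if'' assertion.

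There is no real obstacle here; the claim is a direct manipulation of the two expressions once one recognizes that membership in $\mathcal{I}(\mathbf{p}^*)$ forces the two numerators to coincide. The only thing to be mildly careful about is that the denominators are nonzero so that the division is legitimate, which is immediate from the assumption $u_i \geq 1$ made at the start of the proof of Theorem~\ref{thm::exist}.
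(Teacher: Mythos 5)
Your proof is correct and follows essentially the same route as the paper: the numerator identity for $i \in \mathcal{I}(\mathbf{p}^*)$ plus the term-by-term bound $\max_h \sum_j p^*_j u_{i'}(A^{(j)}_h) \geq \sum_j p^*_j u_{i'}(A^{(j)}_{i'})$ on the denominators are exactly the two facts the paper invokes, and your equality characterization matches as well. No gaps; your explicit remarks on positivity of the denominators are a harmless addition the paper leaves implicit.
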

\cite{svensson1983existence} gives a result similar to Claim~\ref{clm::less::ineq}. Here we provide a simple proof for completeness.
\begin{proof}[Proof of Claim~\ref{clm::less::ineq}]
The inequality follows from the following two facts:
\begin{itemize}
\item $\max_{h} \sum_j p_j u_{i'}(A^{(j)}_{h}) \geq \sum_j  p_j u_{i'}(A^{(j)}_i)$;
\item for any $i \in \mathcal{I}(\mathbf{p}^*)$, $\max_{h} \sum_j p_j u_i(A^{(j)}_{h}) = \sum_j  p_j u_i(A^{(j)}_i)$.
\end{itemize}
Equality holds if and only if for all players, $\max_{h} \sum_j p_j u_{i'}(A^{(j)}_{h}) = \sum_j  p_j u_{i'}(A^{(j)}_i)$ and thus no one envies anyone else.
\end{proof}
\begin{proof}[Proof of Lemma~\ref{lem::IN}]
It is not hard to see that if $\nu(\mathbf{p}^*, \mathbf{w}^*) \in W$ and $\mathbf{w}^*$ is a fixed point, then $w^* = \varpi(\mathbf{p}^*, \mathbf{w}^*) = \nu(\mathbf{p}^*, \mathbf{w}^*)$, and $\frac{\max_{h} \sum_j p_j u_i(A^{(j)}_{h})}{\sum_{i'} \max_{h} \sum_j p_j u_{i'}(A^{(j)}_{h})} =  \frac{\sum_j p_j u_i(A^{(j)}_{i})}{\sum_{i'}  \sum_j p_j u_{i'}(A^{(j)}_{i'})}$. In addition, by Claim~\ref{clm::PE}, $\mathbf{p}^*$ is a Pareto Efficient allocation and so from Claim~\ref{clm::less::ineq}, $\mathbf{p}^*$ is envy-free.
\end{proof}
Next, we will show Lemma~\ref{lem::not::in}, namely that if $\nu(\mathbf{p}^*, \mathbf{w}^*)$ is not in $W$, then $w^* \neq \varpi(\mathbf{p}^*, \mathbf{w}^*)$. In particular, we show that there exist an $i$, such that $\varpi_i(\mathbf{p}^*, \mathbf{w}^*) < w_i^*$. The idea is that the allocation is not envy-free as $\nu(\mathbf{p}^*, \mathbf{w}^*)$ is not in $W$. But there must exist one player  who is in the set $\mathcal{I}(\mathbf{p}^*)$, which is the set of envy-free players, as the envy graph is acyclic. In addition, for this player $i$, $\nu_i(\mathbf{p}^*, \mathbf{w}^*) < w_i^*$ by Claim~\ref{clm::less::ineq}. However, since $\varpi(\mathbf{p}^*, \mathbf{w}^*) = \mathtt{proj}_W(\nu(\mathbf{p}^*, \mathbf{w}^*))$, it is still  possible that $\varpi_i(\mathbf{p}^*, \mathbf{w}^*) = w_i^*$. Actually, we show the following two claims.
\begin{claim} \label{clm::proj}
$\varpi_i(\mathbf{p}^*, \mathbf{w}^*) \leq \max \{\nu_i(\mathbf{p}^*, \mathbf{w}^*), \epsilon \}$ for all $i$.
\end{claim}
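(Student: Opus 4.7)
The plan is to exploit the fact that $\varpi(\mathbf{p}^*,\mathbf{w}^*)$ is, by definition, the Euclidean projection of $\nu(\mathbf{p}^*,\mathbf{w}^*)$ onto the truncated simplex $W=\{\mathbf{w} : \sum_i w_i = 1,\ w_i \ge \epsilon\}$, and to derive the bound $\varpi_i \le \max\{\nu_i,\epsilon\}$ via a simple exchange argument (equivalently, via the KKT form $\varpi_i = \max\{\nu_i - \lambda,\epsilon\}$ for some $\lambda \ge 0$).

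The first step is to check the preliminary identity $\sum_i \nu_i(\mathbf{p}^*,\mathbf{w}^*) = 1$. This is immediate from \eqref{def::nu}: summed over $i$, each of the two fractions in $\nu_i$ equals $1$ because the denominator is exactly the sum of the corresponding numerators over $i$, so the two fractions cancel, leaving $\sum_i w^*_i$, which equals $1$ since $\mathbf{w}^* \in W$. Consequently $\nu(\mathbf{p}^*,\mathbf{w}^*)$ already lies on the affine hyperplane containing $W$, and the projection only needs to pull coordinates that violate $w_i \ge \epsilon$ up to $\epsilon$, redistributing mass among the other coordinates.

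The main step is a perturbation argument: suppose for contradiction that $\varpi_i > \max\{\nu_i,\epsilon\}$ for some $i$, so in particular $\varpi_i > \nu_i$ and $\varpi_i > \epsilon$. Because $\sum_i \varpi_i = 1 = \sum_i \nu_i$, there must exist an index $i'$ with $\varpi_{i'} < \nu_{i'}$. Now decrease coordinate $i$ of $\varpi$ by a small $t>0$ and increase coordinate $i'$ by the same $t$. Feasibility in $W$ is preserved for small $t$ because $\varpi_i > \epsilon$ and there is no upper-bound constraint on $\varpi_{i'}$. A direct calculation yields
\[
\frac{d}{dt}\Bigl[\tfrac{1}{2}\|\varpi - \nu\|_2^2\Bigr]_{t=0}
= (\varpi_{i'}-\nu_{i'}) - (\varpi_i - \nu_i) < 0,
\]
contradicting the fact that $\varpi$ minimizes squared distance to $\nu$ over $W$.

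I do not anticipate a substantive obstacle; the only point requiring care is verifying $\sum_i \nu_i = 1$ so that the exchange partner $i'$ is guaranteed to exist. The rest is the standard characterization of Euclidean projection onto a truncated simplex, which shrinks coordinates rather than expanding them whenever the pre-image already has the correct total mass.
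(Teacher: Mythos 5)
Your proof is correct, and it reaches the paper's conclusion by a somewhat more elementary route. The paper also reduces the claim to the projection program $\min_x \tfrac12\|x-\nu\|^2$ over $W$ and also relies on the same key identity you isolate, namely $\sum_i \nu_i(\mathbf{p}^*,\mathbf{w}^*)=\sum_i w^*_i=1$ (the two fractions in \eqref{def::nu} each sum to $1$ over $i$). But where the paper writes down the KKT conditions, sums the stationarity equations to conclude that the multiplier of the equality constraint satisfies $\lambda=-\tfrac1n\sum_i\beta_i\le 0$, and then splits into the cases $\varpi_i=\epsilon$ versus $\beta_i=0$, you avoid multipliers entirely: assuming $\varpi_i>\max\{\nu_i,\epsilon\}$, you use the shared total mass to produce an index $i'$ with $\varpi_{i'}<\nu_{i'}$, and the transfer of mass $t$ from coordinate $i$ to coordinate $i'$ stays in $W$ (since $\varpi_i>\epsilon$ and the sum is unchanged) while strictly decreasing $\|\varpi-\nu\|^2$ to first order, contradicting optimality of the projection. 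The exchange partner $i'$ exists precisely because $\sum_i\varpi_i=\sum_i\nu_i$ and $\varpi_i>\nu_i$, so the one point you flag as needing care is indeed the crux, and you verify it correctly. The trade-off is minor: your feasible-direction argument is self-contained and needs no convex-duality machinery, while the paper's KKT computation yields the explicit representation $\varpi_i=\nu_i+\lambda+\beta_i$ (equivalently the $\max\{\nu_i+\lambda,\epsilon\}$ form you mention parenthetically), which makes the geometry of the projection onto the truncated simplex explicit; either suffices for the inequality as used in Lemma~\ref{lem::not::in}.
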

\begin{claim}\label{clm::exist::no::envy}
Let $\mathbf{p}^* \in \mathcal{P}(\mathbf{w})$. If $\epsilon < \frac{\rho^n}{n}$, then there exists a player $i$ such that $w_i > \epsilon$ and $i \in \mathcal{I}(\mathbf{p}^*)$.
\end{claim}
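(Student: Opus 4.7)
The plan is a proof by contradiction: assume every player $i \in \mathcal{I}(\mathbf{p}^*)$ satisfies $w_i = \epsilon$ and derive $\epsilon \geq \rho^n/n$, contradicting the hypothesis. The driving engine is the welfare-maximization property of $\mathbf{p}^* \in \mathcal{P}(\mathbf{w})$: since the allocation set is swappable, exchanging the bundles of two players $g$ and $h$ in every $A^{(j)}$ yields another legal mixed allocation, which therefore cannot strictly improve $\sum_i w_i \sum_j p^*_j u_i(A^{(j)}_i)$. This gives the pairwise swap inequality
\[
w_g \sum_j p^*_j \bigl[ u_g(A^{(j)}_g) - u_g(A^{(j)}_h) \bigr] \;+\; w_h \sum_j p^*_j \bigl[ u_h(A^{(j)}_h) - u_h(A^{(j)}_g) \bigr] \;\geq\; 0
\]
for every pair $g \neq h$. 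Whenever $g$ envies $h$ in expectation, the first sum is negative, so the second must be as well, meaning $h$ does not envy $g$; rearrangement then yields $w_h$ times the non-envy gap of $h$ against $g$ at least $w_g$ times the envy gap of $g$ against $h$.

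Next I would use the structure of the envy graph. By Claim~\ref{clm::graph::acyclic} and Claim~\ref{clm::no::envy::set} it is acyclic with non-empty set of sinks $\mathcal{I}(\mathbf{p}^*)$, so from any player $i_0$ there is a directed envy path $i_0 \to i_1 \to \cdots \to i_k$ of length $k \leq n-1$ terminating at some sink $i_k \in \mathcal{I}(\mathbf{p}^*)$. Applying the envy inequality to each consecutive pair and using the normalization $1 \leq u_i(\cdot) \leq 2$, the goal is a per-edge multiplicative lower bound
\[
w_{i_{\ell+1}} \;\geq\; \rho \cdot w_{i_\ell} \qquad (\ell = 0, 1, \ldots, k-1).
\]
Composing along the path gives $w_{i_k} \geq \rho^{k} w_{i_0} \geq \rho^{n} w_{i_0}$, and the contradiction assumption $w_{i_k} = \epsilon$ forces $w_{i_0} \leq \epsilon/\rho^{n}$ for every player $i_0$. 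Summing across the $n$ players, $1 = \sum_i w_i \leq n\epsilon/\rho^{n}$, i.e., $\epsilon \geq \rho^{n}/n$, which contradicts the hypothesis.

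The main technical obstacle is establishing the per-edge factor $\rho$. Both the expected envy gap and the complementary non-envy gap lie in $[0,1]$, and the ratio that controls weight propagation is not uniformly bounded below from the expected-utility inequality alone. I anticipate overcoming this by exploiting the stronger per-allocation swap-optimality constraints that hold for each $A^{(j)}$ in the support of $\mathbf{p}^*$ (sharper than their $p^*_j$-weighted averages), combined with the bounds $u_i(\cdot) \in [1,2]$ used both additively, giving $|u_i(S)-u_i(T)| \leq 1$, and multiplicatively, giving $u_i(S)/u_i(T) \in [1/2,2]$; these should together deliver the value $\rho = 1/2$. This is the only step of the argument where the specific utility normalization enters, and it is what pins down the form $\epsilon < \rho^n/n$ of the hypothesis.
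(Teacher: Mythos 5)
Your overall skeleton is sound and genuinely different from the paper's proof: you negate the claim so that every sink of the envy graph has weight exactly $\epsilon$, walk an envy path of length at most $n-1$ from an arbitrary player to a sink, propagate a per-edge bound $w_{i_{\ell+1}} > \rho\, w_{i_\ell}$, and sum the resulting bounds $w_{i_0}\le \epsilon/\rho^{n}$ to contradict $\epsilon < \rho^n/n$. The paper instead uses a threshold argument: since $\epsilon<\rho^n/n$, a pigeonhole over the $n$ weights produces a $\xi\in(\epsilon,\rho/n)$ such that no $w_i$ lies between $\xi$ and $\xi/\rho$; players are split into a low set $D$ and a high set $U$, Claim~\ref{clm::gap} shows no player of $U$ envies a player of $D$, and acyclicity (Claim~\ref{clm::graph::acyclic}) then yields a sink inside $U$, all of whose members have $w_i>\epsilon$. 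Your path-propagation route, once completed, is arguably a cleaner way to reach the same conclusion.

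However, the step you yourself flag as the ``main technical obstacle'' is exactly where the proposal breaks, and your plan for it would fail. The per-edge factor is not a universal constant and cannot be pinned to $\rho=1/2$ from the normalization $u_i\in[1,2]$: in this claim $\rho$ is the instance-dependent quantity defined in Claim~\ref{clm::gap}, namely half the minimum over allocations and ordered pairs of the ratio of the envier's utility gap to the envied player's gap, and that minimum can be arbitrarily small (an envier's gap of $0.01$ against an envied player's gap of $0.9$, say). The $[1,2]$ normalization only bounds each difference above by $1$; it gives no positive lower bound on the ratio of two such differences, so per-allocation swap-optimality plus these bounds cannot deliver a uniform $1/2$ --- if it could, the claim would hold with $\epsilon<2^{-n}/n$ for every instance, which is false and is precisely why the paper introduces $\rho$. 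The repair is immediate and uses material already available: the bound you need is the contrapositive of Claim~\ref{clm::gap} (if $i$ envies $h$ under $\mathbf{p}^*\in\mathcal{P}(\mathbf{w})$, then $w_h>\rho\, w_i$), whose proof is the per-allocation swap inequality you wrote down, applied to an allocation in the support where the envy is strict, compared against the definition of $\rho$. You should also note that $\rho\le 1$ (swappability makes the index set of the minimum closed under exchanging the roles of the two players, so the minimal ratio is at most $1$ and $\rho\le 1/2$), which is what justifies the step $\rho^{k}\ge\rho^{n}$ for $k\le n-1$ in your composition along the path. With these substitutions your argument goes through.
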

\begin{proof}[Proof of Lemma~\ref{lem::not::in}]
Since $\nu(\mathbf{p}^*, \mathbf{w}^*)$ is not in $W$, then  $\max_{\bar{i}} \sum_j p_j u_i(A^{(j)}_{\bar{i}}) \neq \sum_j  p_j u_i(A^{(j)}_i)$ for some $i$. So $\mathbf{p}^*$ is not an envy-free allocation, which implies that for $i \in \mathcal{I}(\mathbf{p}^*)$,\begin{align*}
\frac{\max_{\bar{i}} \sum_j p_j u_i(A^{(j)}_{\bar{i}})}{\sum_{i'} \max_{\bar{i}} \sum_j p_j u_{i'}(A^{(j)}_{\bar{i}})} < \frac{\sum_j p_j u_i(A^{(j)}_{i})}{\sum_{i'}  \sum_j p_j u_{i'}(A^{(j)}_{i'})}.  \numberthis \label{ineq::strict}
\end{align*}  From Claim~\ref{clm::exist::no::envy}, we know that there exists one player $i^* \in \mathcal{I}(\mathbf{p}^*)$ with  $w^*_{i^*} > \epsilon$. Therefore, by \eqref{ineq::strict} and \eqref{def::nu}, $\nu_{i^*}(\mathbf{p}^*, \mathbf{w}^*) < w^*_{i^*}$. By Claim~\ref{clm::proj}, $\varpi_{i^*}(\mathbf{p}^*, \mathbf{w}^*) \leq \max\{\nu_{i^*}(\mathbf{p}^*, \mathbf{w}^*), \epsilon\} < w^*_{i^*}$, the result follows.
\end{proof}
The proof of Claim~\ref{clm::proj} follows readily from the definition of $\mathtt{proj}$.
\begin{proof}[Proof of Claim~\ref{clm::proj}]
For simplicity, let $x^* = \varpi(\mathbf{p}^*, \mathbf{w}^*)$ and $y^* = \nu(\mathbf{p}^*, \mathbf{w}^*)$. So we need to show $x_i^* \leq \max\{ y_i^*, \epsilon\}$ for all $i$. Then, by \eqref{def::varpi}, $x^* = \mathtt{proj}_W y^*$ means $x^*$ is the result of the following optimization program:
$\min_x \frac{1}{2}\|x - y^*\|^2$ such that $x \in W$. Note that $W = \{(w_1, \cdots, w_n | \sum_i w_i = 1 \text{ and } w_i \geq \epsilon \}$. So, $x^*$ is the optimal solution for the following convex program:
\begin{align*}
&\min_x \frac{1}{2} \|x - y^*\|^2\\
&~\text{s.t.} \sum_i x_i = 1 \\
& ~~~~~~ x_i \geq \epsilon~~~~~~\forall i
\end{align*}
The Lagrange form is $\frac{1}{2} \|x - y^*\|^2 - \lambda (\sum_i x_i - 1) - \sum_i \beta_i(x_i - \epsilon)$. From the KKT conditions, we know that $x^*_i - y^*_i - \lambda - \beta_i = 0$, $\beta_i \geq 0$, $x^*_i \geq \epsilon$, and $\beta_i (x^*_i - \epsilon) = 0$; also $\sum_i x^*_i = 1$. From the construction of $\nu(\mathbf{p}^*, \mathbf{w}^*)$, we know that $\sum_i y^*_i = \sum_i w_i^* = 1$ by \eqref{def::nu}. By summing the KKT condition $x^*_i - y^*_i - \lambda - \beta_i = 0$ over all $i$, we obtain $\lambda= -  \frac{1}{n} \sum_i \beta_i \leq 0$. If $x_i^* = \epsilon$ then the result holds. Otherwise, $\beta_i =0$ as $\beta_i (x_i^* - \epsilon) = 0$; this implies $x_i^*= y_i^* + \lambda \leq y_i^*$, as $\lambda \leq 0$, and again the result holds.
\end{proof}
The proof of Claim~\ref{clm::exist::no::envy} needs the following additional claim.
\begin{claim}\label{clm::gap}
Let $\mathbf{p} \in \mathcal{P}(\mathbf{w})$ be a Pareto Efficient allocation. Suppose that  $w_h \leq \rho w_i$,  where $$\rho = \frac{1}{2} \min_{\substack{i, h, j \\ u_i(A^{(j)}_i) < u_i(A^{(j)}_h) \text{ and  } \\ u_h(A^{(j)}_i) < u_h(A^{(j)}_h)}}  \frac{u_i(A^{(j)}_h) - u_i(A^{(j)}_i)}{u_h(A^{(j)}_h) - u_h(A^{(j)}_i)}.$$
Then player $i$ will not envy player $j$
\end{claim}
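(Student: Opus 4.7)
The plan is to argue by contradiction: assume player $i$ envies player $h$, so $\sum_j p_j u_i(A^{(j)}_h) > \sum_j p_j u_i(A^{(j)}_i)$, and derive a lower bound on $w_h/w_i$ that contradicts $w_h \leq \rho w_i$. Since envy is an inequality of expectations, the first step is to transfer it to a single allocation in the support of $\mathbf{p}$: the strict inequality forces some $j$ with $p_j > 0$ to satisfy $u_i(A^{(j)}_h) > u_i(A^{(j)}_i)$. This particular $j$ will be the witness.

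For this fixed $j$, Claim~\ref{clm::fixed::1} guarantees that $A^{(j)}$ maximizes the weighted welfare $\sum_{i'} w_{i'} u_{i'}(A^{(j)}_{i'})$ over all legal allocations. By the swappable hypothesis, the allocation $A^{(j')}$ obtained by swapping the bundles $A^{(j)}_i$ and $A^{(j)}_h$ is also legal, so comparing $A^{(j)}$ against $A^{(j')}$ gives
$w_i\bigl[u_i(A^{(j)}_h) - u_i(A^{(j)}_i)\bigr] \leq w_h\bigl[u_h(A^{(j)}_h) - u_h(A^{(j)}_i)\bigr].$
The left-hand side is strictly positive by the choice of $j$, so the right-hand side must also be strictly positive; that is, $u_h(A^{(j)}_h) > u_h(A^{(j)}_i)$. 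Therefore the triple $(i,h,j)$ is one of the triples appearing in the minimum that defines $\rho$, and
$\frac{u_i(A^{(j)}_h) - u_i(A^{(j)}_i)}{u_h(A^{(j)}_h) - u_h(A^{(j)}_i)} \geq 2\rho.$
Combining with the welfare inequality yields $w_h/w_i \geq 2\rho > \rho$, contradicting $w_h \leq \rho w_i$.

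The only subtle step is the reduction from envy-in-expectation to a single witness allocation $A^{(j)}$ with $p_j>0$ on which the envy is realized pointwise; without this reduction the welfare-maximality property of Claim~\ref{clm::fixed::1} could not be coupled with the swappable hypothesis on a concrete $j$, and the ratio of utility gaps would not be directly comparable to $\rho$. Once the right $j$ is chosen, the remainder is an elementary two-player, two-bundle rearrangement; the factor $\tfrac{1}{2}$ in the definition of $\rho$ is used precisely to upgrade the derived bound $w_h/w_i \geq 2\rho$ to a strict contradiction against $w_h/w_i \leq \rho$.
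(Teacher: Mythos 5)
Your proof is correct and follows essentially the same route as the paper: both rely on Claim~\ref{clm::fixed::1} to place the support of $\mathbf{p}$ inside $\mathcal{A}(\mathbf{w})$, apply the swappable property to get the two-player weighted-welfare comparison, and contradict the definition of $\rho$. The only cosmetic difference is that you extract a single witness allocation from the envy-in-expectation, whereas the paper shows pointwise non-envy on every allocation in $\mathcal{A}(\mathbf{w})$ and then averages.
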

The intuition for this claim is that if we put a relatively higher weight on player $i$ rather than on player $h$ and maximize the sum of weighted utilities, then player $i$ should not envy player $h$. This proof uses the swappable property. One thing to note is that $\rho$ is well defined and positive as the total number of allocations is finite.
\begin{proof}
Consider $\mathcal{A(\mathbf{w})}$ as defined in Claim~\ref{clm::fixed::1}.  Suppose that for any pure allocation in this set, player $i$ does not envy player $h$. Then player $i$ will not envy player $h$ in the mixed allocation $\mathbf{p} \in \mathcal{P}(\mathbf{w})$. We show by contradiction that for any pure allocation in this set, player $i$ will not envy player $h$. Suppose player $i$ envies player $h$ in an allocation $A^{(j)}$. Then, since $j \in \mathcal{A(\mathbf{w})}$ and the allocation set is swappable,
\begin{align*}
w_i u_i(A^{(j)}_i) + w_h u_h(A^{(j)}_h) \geq w_i u_i(A^{(j)}_h) + w_h u_h(A^{(j)}_i).
\end{align*}
Thus
\begin{align*}
w_h \left[ u_h(A^{(j)}_h) - u_h(A^{(j)}_i) \right] \geq w_i \left[u_i(A^{(j)}_h) - u_i(A^{(j)}_i) \right]. \numberthis \label{ineq::fix::1}
\end{align*}
Since player $i$ envies player $h$,
\begin{align*}
u_i(A^{(j)}_i) < u_i(A^{(j)}_h). \numberthis \label{ineq::fix::2}
\end{align*}
From \eqref{ineq::fix::1}, \eqref{ineq::fix::2} and the fact that $\mathbf{w}$ is strictly positive,
\begin{align*}
u_h(A^{(j)}_i) < u_h(A^{(j)}_h).
\end{align*}
Since $0 < w_h \leq \rho w_i$ and from \eqref{ineq::fix::1},
\begin{align*}
\rho (u_h(A^{(j)}_h) - u_h(A^{(j)}_i)) \geq u_i(A^{(j)}_h) - u_i(A^{(j)}_i),
\end{align*}
which contradicts the definition of $\rho$.
\end{proof}
\hide{\begin{claim}\label{clm::exist::no::envy}
Given $\mathbf{p}^* \in \mathcal{P}(\mathbf{w})$, if $\epsilon < \frac{\rho^n}{n}$, then there exists a player $i$ such that $w_i > \epsilon$ and $i \in \mathcal{I}(\mathbf{p}^*)$.
\end{claim}}
Given Claim~\ref{clm::gap}, Claim~\ref{clm::exist::no::envy} can be shown by selecting a small enough $\epsilon$.
\begin{proof}[Proof of Claim~\ref{clm::exist::no::envy}]
Since $\epsilon < \frac{\rho^n}{n}$, there exists an $\xi$ such that $\frac{\rho}{n} > \xi > \epsilon$ and, for each $i$, $w_i$ is not in the interval between $\xi$ and $\frac{\xi}{\rho}$. Therefore, we can separate all the players into two sets $D = \{ i | w_i < \xi\}$ and $U = \{i | w_i > \frac{\xi}{\rho}\}$. So for every player $i$ in $U$, $w_i > \epsilon$. Note that $D \cap U = \emptyset$, $D \cup U = \{1, 2, \cdots, n\}$, and $U$ is not empty since there exists one player $i$ with $w_i \geq \frac{1}{n}$. By Claim~\ref{clm::gap}, we know that players in $U$ will not envy  players in $D$, and by Claim~\ref{clm::graph::acyclic} we know that the envy graph is acyclic, thus there must be a player in $U$ that is envy-free, i.e. in $\mathcal{I}(\mathbf{p}^*)$. The result follows.
\end{proof}

\section{Proof of Theorem~\ref{thm::cc::1}}
\hide{In this section, we will look at the communication complexity of finding a Pareto Efficient and envy-free allocation.

\subsection{Setting}

In this setting, we assume that each player $i$ only knows her own utility function $u_i(\cdot)$. She has no knowledge of others' utility functions. The problem is to find out the minimum number of bits they need to communicate with each other in order to calculate a Pareto Efficient and envy-free allocation, which means that, finally, after communicating with each others and doing some calculations by themselves,  every player will agree on one allocation which is Pareto Efficient and envy-free.

In this setting, we assume that for every $S$ and $i$, $u_i(S)$ can be represented with a polynomial number of bits. Given this assumption, one easy conclusion is that there always exists a protocol  which uses an exponential number of bits (exponential in the number of items) and outputs an allocation which is Pareto Efficient and envy-free. This is true because there would be no difficulty for all players to calculate one Pareto Efficient and envy-free allocation if each player knows everyone else's utility function. An exponential number of bits of communication suffices to achieve this.

So, our question is if it is possible there exists a protocol that needs only a polynomial number of bits.

We show the following negative result.
\begin{theorem} \label{thm::cc::1}
  Any protocol that calculates Pareto Efficient and envy-free allocation for two players with submodular uility functions needs an exponential number of bits.
\end{theorem}

The definition of a submodular utility function follows.
\begin{definition}
  $u(\cdot)$ is a submudular function if and only if for any $X \subseteq Y$ and any element $e$,
  \begin{align*}
    u(X \cup \{e\}) - u(X) \geq u(Y \cup \{e\}) - u(Y).
  \end{align*}
\end{definition}}

In order to prove this result, we make a reduction to the Disjointness problem.
\paragraph{Disjointness problem} Suppose there are two players. Each player has a string of bits of length $\mathcal{L}$. Suppose $x_i$ is player $i$'s string. The players want to determine if there exists a $j$ such that $x_{1j} = x_{2j} = 1$. The complexity question asks how many bits they need to communicate in order to determine the result.

\begin{theorem}[\cite{kalyanasundaram1992probabilistic, razborov1990distributional}]
  Any protocol for the Disjointness problem needs to communicate $\Omega(\mathcal{L})$ bits.
\end{theorem}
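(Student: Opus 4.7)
The plan is to prove this standard randomized $\Omega(\mathcal{L})$ lower bound through the information-complexity framework (cf.\ Bar-Yossef--Jayram--Kumar--Sivakumar), which gives the cleanest modern derivation of both cited results. By Yao's minimax principle, it suffices to exhibit a distribution $\mu$ on input pairs $(x_1,x_2)\in\{0,1\}^\mathcal{L}\times\{0,1\}^\mathcal{L}$ such that every deterministic protocol with bounded distributional error on $\mu$ must communicate $\Omega(\mathcal{L})$ bits.

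First I would introduce the information cost of a protocol $\pi$ as $\mathrm{IC}_\mu(\pi)=I(X_1,X_2;\Pi(X_1,X_2)\mid D)$, where $\Pi$ is the transcript and $D$ is auxiliary ``partition'' randomness; an entropy/encoding argument gives $\mathrm{IC}_\mu(\pi)\le |\pi|$, so a lower bound on information cost suffices. Next, I would observe the decomposition $\mathrm{DISJ}(x_1,x_2)=\neg\bigvee_{j=1}^{\mathcal{L}}\mathrm{AND}(x_{1,j},x_{2,j})$, so disjointness is an OR of $\mathcal{L}$ coordinate-wise ANDs, and choose the single-coordinate ``collapsing'' distribution $\eta$ that is uniform on $\{(0,0),(0,1),(1,0)\}$ together with $\mu=\eta^\mathcal{L}$. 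Crucially, under $\mu$ the instances are always disjoint, which lets the embedding argument go through while still forcing a correct protocol to resolve each coordinate.

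The core step is a direct-sum theorem: any randomized protocol that decides $\mathrm{DISJ}$ correctly (on a mixture of $\mu$ and perturbations that plant a unique intersection at a random coordinate) can be used, via coordinate embedding, to solve a single copy of $\mathrm{AND}$ under $\eta$ with the embedded coordinate's input, while the other $\mathcal{L}-1$ coordinates are simulated by the players using shared and private randomness drawn from $\eta$. This yields $\mathrm{IC}_\mu(\mathrm{DISJ})\ge \mathcal{L}\cdot\mathrm{IC}_\eta(\mathrm{AND})$. The final ingredient is a constant lower bound $\mathrm{IC}_\eta(\mathrm{AND})=\Omega(1)$, which I would prove by a Hellinger-distance calculation: correctness on the off-distribution input $(1,1)$ forces the transcript distributions under $(0,1)$ and $(1,0)$ to be separated in Hellinger distance by a constant, and the cut-and-paste / rectangle property of deterministic protocols converts this into $\Omega(1)$ mutual information under $\eta$.

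The main obstacle will be the direct-sum step, specifically showing that the embedding preserves information cost: one must argue that from the viewpoint of the embedded coordinate the other $\mathcal{L}-1$ coordinates look like ``fresh'' independent noise (the role of the collapsing support of $\eta$ is to make this possible without the simulators needing to communicate) and that conditioning on the partition variable $D$ kills any leakage between coordinates. Once the embedding and the Hellinger bound are in place, combining $|\pi|\ge \mathrm{IC}_\mu(\pi)\ge \mathcal{L}\cdot\mathrm{IC}_\eta(\mathrm{AND})=\Omega(\mathcal{L})$ finishes the proof.
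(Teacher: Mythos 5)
This theorem is not proved in the paper at all: it is imported as a black box from Kalyanasundaram--Schnitger and Razborov, so there is no in-paper argument to compare yours against. Your outline follows the information-complexity proof of Bar-Yossef, Jayram, Kumar and Sivakumar, which is a legitimate and standard modern route to the same $\Omega(\mathcal{L})$ randomized bound, and differs from the originals (Razborov's corruption/distributional argument over a carefully chosen non-product distribution, and Kalyanasundaram--Schnitger's earlier argument). It is worth noting that for the use this paper makes of the theorem, if one only cares about deterministic protocols, a one-line fooling-set argument on the pairs $(S,\bar S)$ already gives $\Omega(\mathcal{L})$; the cited randomized bound is the stronger statement your sketch is aiming at.

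There is, however, a genuine flaw in your opening step. You invoke Yao's minimax principle to reduce to deterministic protocols with small distributional error under $\mu=\eta^{\mathcal{L}}$, where $\eta$ is supported on $\{(0,0),(0,1),(1,0)\}$. But then $\mu$ is supported entirely on disjoint pairs, so the zero-communication protocol that always answers ``disjoint'' has zero distributional error under $\mu$, and Yao's principle yields nothing. The information-complexity argument deliberately does \emph{not} pass to distributional correctness under the collapsing distribution: one keeps the protocol randomized and requires worst-case correctness (bounded error on \emph{every} input, in particular on the off-support inputs with a planted intersection), and uses $\mu$ only to measure the information cost $I(X_1,X_2;\Pi\mid D)$, combined with $\mathrm{IC}_\mu(\pi)\le|\pi|$. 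You implicitly acknowledge this in the direct-sum paragraph, but the reduction as stated in your first paragraph is incorrect and needs to be replaced. A second, smaller point: for the conditional-independence and cut-and-paste steps to apply, $\eta$ must be exhibited as a mixture, over the partition variable $D$, of product distributions (e.g.\ $D$ selects which player's coordinate is forced to $0$); the uniform distribution on the three points does admit such a decomposition, but this has to be stated and used, since the direct-sum and Hellinger steps break for a general non-product $\eta$.
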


\begin{proof}[Proof of Lemma~\ref{thm::cc::1}]
We follow a similar approach to \cite{plaut2019communication}. We give a reduction to the Disjointness problem.

Suppose there are $m = 2p$ items and two players. Each player receives a string of bits.
Let $x_1$ and $x_2$, respectively, be the strings player $1$ and player $2$ receive and suppose they satisfy $|x_1| = |x_2| = \frac{1}{2} \chs{2p}{p}$.

Let $\mathcal{T} = \{ T_{1}, T_{2}, \cdots, T_{r}\}$ be a set of allocations, where $r = \frac{1}{2} \chs{2p}{p}$ and $T_{j} = (T_{j1}, T_{j2})$. Informally speaking, $T_{j1}$ is a subset of items which includes item $1$ and $T_{j2}$ is the complementary set of $T_{j1}$. It's easy to verify that there are in total $\frac{1}{2} \chs{2p}{p}$ possible choices of $T_{j1}$; this is the reason that $r = \frac{1}{2} \chs{2p}{p}$. Formally, $T_{j} = (T_{j1}, T_{j2})$ is an allocation such that $T_{j1} \cup T_{j2} = \{1, 2, \cdots, 2p\}$, $T_{j1} \cap T_{j2} = \emptyset$, $1 \in T_{i1}$ and for any $j \neq l$, $T_{j1} \neq T_{l1}$.

Then, given $x_i$, player $i$'s utility function is defined as follows:
\begin{align*}
  u_i(S) =   \begin{cases}
      3|S|  ~~~ &\mbox{if $|S| < p$} \\
      3 k   ~~~ &\mbox{if $|S| > p$} \\
      3 k   ~~~ &\mbox{if $S = T_{ji}$ for some $j$ and $x_{ij} = 1$} \\
      3 k - 1 ~~~&\mbox{otherwise}.
    \end{cases}
\end{align*}
In the following paragraphs, we prove the following lemma.
\begin{lemma} \label{lem::cc::1}
  If $(x_1, x_2)$ is a no-instance of Disjointness problem, then for all Pareto Efficient and envy-free allocations, the sum of utilities, $u_1 + u_2$, will equal $6p$. Otherwise, if $(x_1, x_2)$ is a yes-instance, then for all Pareto Efficient and envy-free allocations, the sum of utilities will be no larger than $6p - 1$.
\end{lemma}
This lemma actually shows that the communication cost of calculating the utility of the Pareto Efficient and envy-free allocation is at least the communication cost of Disjointness problem with $\mathcal{L} = \frac{1}{2} \chs{2p}{p}$, which is exponential in $m$, the number of items, as $m = 2p$. Note that the communication cost of calculating the utility of an allocation is no larger than the communication cost of calculating the allocation, as, without additional communication cost, one can easily calculate the utility from the allocation. Therefore, this statement proves that the communication cost of calculating the Pareto Efficient and envy-free allocation is exponential on $m$.

\begin{proof}[Proof of Lemma~\ref{lem::cc::1}]
If $(x_1, x_2)$ is a no-instance of the Disjointness problem, then there exists a $j$, such that $x_{1j} = x_{2j} = 1$. This implies there exists a Pareto Efficient and envy-free allocation: player $1$ gets $T_{j1}$ and player $2$ gets $T_{j2}$. In this case, both players will have utility $3p$. Since the most utility any player can get is $3p$,  any allocation which is Pareto Efficient must give both players $3p$ utility. This implies that for any allocation which is Pareto Efficient and envy-free, the sum of the utilities, $u_1 + u_2$, will be $6p$.

If $(x_1, x_2)$ is a yes-instance of the Disjointness problem, then for any $j$, $x_{1j} = 0$ or $x_{2j} = 0$. A simple observation is that any allocation in which the items are not fully allocated is not Pareto Efficient. Then we look at any deterministic fully allocated allocation $(A_1, A_2)$. For any $j$, if $A_j = T_{j1}$ and $x_{1j} = 0$ then $u_1(A_1) = 3p - 1$ and $u_2(A_2) \leq 3p$. So $u_1(A_1) + u_2(A_2) \leq 6p - 1$. The same is true if $A_2 = T_{j2}$ and $x_{2j} = 0$. Otherwise, either $|A_1| < p$ or $|A_2| < p$ and then $u_1(A_1) + u_2(A_2) \leq 6p - 3$.

\end{proof}

The final thing to do is to check that this utility function is a submodular utility function. It's easy to see that our utility function satisfies the submodular condition that for any $X \subseteq Y$ and any element $e$,
$u(X \cup \{e\}) - u(X) \geq u(Y \cup \{e\}) - u(Y)$.

The result follows.
\end{proof}

\section{Discussion}
In this paper, we showed that for any utility functions, a Pareto Efficient and envy-free allocation always exists if the allocation set is swappable. It needs to be noted that for the divisible case, the problem still remains open. Our proof cannot be simply generalized to the divisible case. This is because our $\epsilon$ will tend to $0$ as $\rho$ tends to $0$, in which case $\mathcal{P}(\mathbf{w})$ will no longer ensure Pareto Efficiency as for some $i$, $w_i = 0$.

Another open problem is in communication complexity. One simple fact is that if all players have linear utility functions, then there exists a protocol which requires a polynomial number of bits. However, the communication complexity is unknown for larger classes of utilities, such as gross substitutes.
\bibliographystyle{plain}
\bibliography{sample}

\end{document}